\theoremstyle{plain}
\newtheorem{lemma}{Lemma}
\newtheorem{theorem}{Theorem}
\theoremstyle{definition}
\newtheorem{definition}{Definition}
\newtheorem{example}{Example}
\newtheorem{assumption}{Assumption}{\bfseries}{\itshape}
\def\bstr{b}
\def\bfstr{bf}
\def\cstr{c}
\def\fstr{f}
\def\strLst{A,B,C,D,d,E,F,G,H,I,J,K,L,M,N,O,P,Q,R,S,T,U,V,W,X,Y,Z}
\newcommand{\MkB}[1]{\expandafter\def\csname\bstr#1\endcsname{\mathbb{#1}}}
\strLst\do{%
    \expandafter\MkB \i     }
\newcommand{\MkBF}[1]{\expandafter\def\csname\bfstr#1\endcsname{\mathbf{#1}}}
\strLst\do{%
    \expandafter\MkBF \i     }
\newcommand{\MkCal}[1]{\expandafter\def\csname\cstr#1\endcsname{\mathcal{#1}}}
\strLst\do{%
    \expandafter\MkCal \i     }
\newcommand{\MkFrak}[1]{\expandafter\def\csname\fstr#1\endcsname{\mathfrak{#1}}}
\strLst\do{%
    \expandafter\MkFrak \i     }
\newcommand{\Lin}[1]{\mathop{\mathsf{Lin}}(#1)}
\newcommand{\LinEq}[1]{\overline{\mathsf{Lin}}(#1)_{\sim}}
\newcommand{\LinAc}[1]{\overline{\mathsf{Lin}}(#1)}
\newcommand{\bra}[1]{\left\langle #1\right\vert}
\newcommand{\ket}[1]{\left\vert #1\right\rangle}
\newcommand{\braket}[2]{\left\langle \left. #1 \right\vert #2\right\rangle}
\newcommand{\cond}[1]{\mathsf{cond}(#1)}
\newcommand{\ac}[1]{\mathsf{#1}} %
\newcommand{\Shift}{\mathsf{Shift}}
\newcommand{\Trans}{\mathsf{Trans}}
\newcommand{\MatchGT}[3]{\mathsf{M}^{{\text{\tiny $#1$}}}_{#2}(#3)}
\newcommand{\obj}[1]{\mathsf{obj}(#1)}
\newcommand{\mIO}{\mathop{\varnothing}}
\newcommand{\canRep}[2]{\overline{\rho}^{#1}_{\bfC}\left(#2\right)}
\renewcommand{\vec}[1]{\underline{#1}}
\newcommand{\compGT}[4]{#2 {}^{#3}\!{\triangleleft}_{#1} #4}
\newcommand{\rap}[3]{#2 \star_{#1}{#3}}
\newcommand{\RMatchGT}[3]{\mathcal{M}^{{\text{\tiny $#1$}}}_{#2}(#3)}
 \newcommand{\inputTikz}[1]{%
  \ensuremath{\vcenter{\hbox{\includegraphics{diagrams/#1.pdf}}}}%
 }
\gdef\tdScale{0.5}
\newcommand{\inputTD}[2]{%
  \vcenter{\hbox{\includegraphics[page=#1, scale=#2]{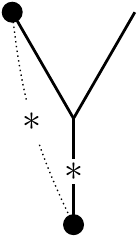}}}%
}
\newcommand{\inputD}[2]{%
  \vcenter{\hbox{\includegraphics[scale=#1]{images/#2}}}%
}
\colorlet{h1color}{blue!70!black} %
\colorlet{h2color}{orange!90!black} %
\colorlet{h3color}{blue!40!white} %
\colorlet{h4color}{green!40!black} %
\title{On Stochastic Rewriting and Combinatorics\\
via Rule-Algebraic Methods\thanks{The author would like to thank Paul-Andr\'{e} Melli\`{e}s and Noam Zeilberger for fruitful discussions and valuable feedback.}}
\author{Nicolas Behr
\institute{Université de Paris, CNRS, IRIF\\ F-75006, Paris, France}
\email{nicolas.behr@irif.fr}
}
\begin{document}
\maketitle

\begin{abstract}
Building upon the rule-algebraic stochastic mechanics framework, we present new results on the relationship of stochastic rewriting systems described in terms of continuous-time Markov chains, their embedded discrete-time Markov chains and certain types of generating function expressions in combinatorics. We introduce a number of generating function techniques that permit a novel form of static analysis for rewriting systems based upon marginalizing distributions over the states of the rewriting systems via pattern-counting observables.
\end{abstract}

\section{Introduction}

An important aspect of the standard theory of continuous-time Markov chains (CTMCs)~\cite{norris} concerns the well-known fact that the CTMC semantics may be equivalently described via a pair of discrete-time Markov chains (DTMCs), where the so-called \emph{embedded DTMC} encodes the probabilities for each of the possible transitions, and with the second DTMC encoding the jump-times for the transitions. This feature permits to design algorithms for \emph{simulating} CTMCs, for instance in the form of Gillespie's stochastic simulation algorithms for chemical reaction systems~\cite{Gillespie1977}, but in particular also in several variations for the simulation of stochastic rewriting systems, such as via the \emph{KaSim} simulation engine of the Kappa platform~\cite{Boutillier:2018aa}. The main contribution of the present paper consists in uncovering a hitherto unknown intimate relationship between three types of \emph{moment generating functions} that are constructable from the data that specifies a stochastic rewriting system, and for a chosen set of \emph{pattern count observables}: those of the CTMC itself, those of the embedded DTMC, and those of the (weighted) combinatorial species generated by the rewriting rules.

\section{Prerequisite: the rule algebra framework}\label{sec:RA}

The methodology developed in the present paper relies heavily upon the mathematical formalism introduced in~\cite{nbSqPO2019,bdg2016,bp2018,bdg2019,BK2020,bp2019-ext}, yet due to space restrictions, we will only provide some notations and essential definitions here, inviting the interested readers to consult loc.\ cit.\ for the full technical details.

\subsection{DPO- and SqPO-type compositional rewriting semantics}

Throughout this paper, we will consider categorical rewriting theories over categories that satisfy the following sets of properties (with DPO- and SqPO-semantics to be introduced below)\footnote{We invite the readers to consult~\cite{behrRaSiR} or~\cite{BK2020} for compact accounts of the relevant technical definitions of $\cM$-adhesive categories, pullbacks, pushouts, pushout complements, final pullback complements and their respective properties.}:
\begin{assumption}[cf.\ \cite{BK2020}, As.~1]\label{as:main}
    $\bfC\equiv(\bfC,\cM)$ is a finitary $\cM$-adhesive category with $\cM$-initial object, $\cM$-effective unions and epi-$\cM$-factorization. In the setting of \emph{Sesqui-Pushout (SqPO) rewriting}, we assume in addition that all final pullback complements (FPCs) along composable pairs of $\cM$-morphisms exist, and that $\cM$-morphisms are stable under FPCs.
\end{assumption}

\begin{definition}[\cite{habel2009correctness,ehrig2014mathcal}]
\emph{Conditions} over objects $X\in \obj{\bfC}$ are defined recursively: $\ac{true}_X$ is a condition; for any $\cM$-morphism $(X\hookrightarrow Y)$ and for any condition $\ac{c}_Y$ over $Y$, $\exists(X\hookrightarrow Y,\ac{c}_Y)$ is a condition; $\neg \ac{c}_X$ and $\ac{c}_X^{(1)}\land \ac{c}_X^{(2)}$ are conditions if $\ac{c}_X,\ac{c}_X^{(1)},\ac{c}_X^{(2)}$ are conditions. \emph{Satisfaction} of a condition $\ac{c}_X$ by a $\cM$-morphism $(a:X\hookrightarrow Z)$, denoted $a\vDash\ac{c}_X$, is also defined recursively: $a$ satisfies $\ac{true}_X$; $a$ satisfies $\exists(X\hookrightarrow Y,\ac{c}_Y)$ if there exists a $\cM$-morphism $(b:Y\hookrightarrow Z)$ such that $b\vDash \ac{c}_Y$ and $a=b\circ(Y\hookleftarrow X)$; $a\vDash\neg\ac{c}_X$ if not $a\vDash\ac{c}_X$, and $a\vDash\ac{c}_X^{(1)}\land\ac{c}_X^{(2)}$ if $a\vDash\ac{c}_X^{(1)}$ and $a\vDash\ac{c}_X^{(2)}$. Two conditions $\ac{c}_X$ and $\ac{c}_X'$ are defined to be \emph{equivalent}, denoted $\ac{c}_X\equiv\ac{c}_X'$, if $a\vDash\ac{c}_X \Leftrightarrow a\vDash\ac{c}_X'$ for all $\cM$-morphisms $(a:X\hookrightarrow Y)$. By standard convention, $\exists(a)\equiv\exists(a,\ac{true})$ and $\forall(a,\ac{c}_Y)\equiv\neg\exists(a,\neg\ac{c}_Y)$. We denote the class of all conditions over objects of $\bfC$ by $\cond{\bfC}$.
\end{definition}

Throughout this work, we will be exclusively interested in the following notion of \emph{rewriting rules}:
\begin{definition}
Let $\LinEq{\bfC}$ denote the set\footnote{Note that while $\LinAc{\bfC}$ is typically a proper class (i.e.\ of $\cM$-linear rules with conditions), we make the tacit assumption here that $\LinEq{\bfC}$ forms a set. We opt throughout this paper for a ``right-to-left'' convention for rules and their compositions that is non-standard w.r.t.\ the standard graph rewriting literature, yet which is preferable in the setting of rule-algebraic computations. This is due to rules $r$ giving rise to linear operators $\rho(\delta(r))$ (Definition~\ref{def:RAcanReps}) which by standard mathematical convention left-compose, whence our notational convention is ultimately motivated by the representation property stated in Theorem~\ref{thm:CanRepProps}.} of equivalence classes of \emph{$\cM$-linear rules} $\LinAc{\bfC}$~\cite{behrRaSiR}, 
\begin{equation}
\LinEq{\bfC}:=
\{R=(r=(O\xleftarrow{o}K\xrightarrow{i}I),\ac{c}_I)\mid o,i\in \cM\,,\;\ac{c}_I\in \cond{\bfC}\}\diagup\sim\,,
\end{equation}
where $R\sim R'$ iff there exist isomorphisms $(\omega:O\rightarrow O')$, $(\kappa:K\rightarrow K')$ and $(\iota:I\rightarrow I')$ such that $\omega\circ o=o'\circ \kappa$ and $\iota\circ i=i'\circ \kappa$, and if in addition $\ac{c}_I\equiv\ac{c}_{I}'$ (i.e.\ if $\forall (m:I\hookrightarrow X)\in \cM: m\vDash \ac{c}_I\ \Leftrightarrow m\vDash\ac{c}_{I}'$).
\end{definition}

The two key definitions of rewriting theory for the formulation of rule algebras (cf.\ Section~\ref{sec:RAdef}) are the definition of the \emph{action of rules on objects} and of the \emph{sequential composition of rules}.
\begin{definition}[Direct derivations; cf.~\cite{BK2020}, Def.~3]
    Let $r=(O\hookleftarrow K\hookrightarrow I)\in\Lin{\bfC}$ and $\ac{c}_I\in\cond{\bfC}$ be concrete representatives of some equivalence class $R=[(r,\ac{c}_I)]_{\sim}\in\LinEq{\bfC}$, and let $X,Y\in\obj{\bfC}$ be objects. Then a \emph{type $\bT$ direct derivation} is defined as a commutative diagram such as below right, where all morphism are in $\cM$ (and with the left representation a shorthand notation)
\begin{equation}\label{eq:DD}
\inputD{1}{DD1}\quad :=\quad 
\inputD{1}{DD2}\,.
\end{equation}
with the following pieces of information required relative to the type:
\begin{enumerate}
    \item $\mathbf{\bT=DPO}$: given $(m:I\hookrightarrow X)\in\cM$, $m$ is a \emph{DPO-admissible match of $R$ into $X$}, denoted $m\in\MatchGT{DPO}{R}{X}$, if $m\vDash \ac{c}_I$ and $(A)$ is constructable as a \emph{pushout complement}, in which case $(B)$ is constructed as a \emph{pushout}.
    \item $\mathbf{\bT=SqPO}$: given $(m:I\hookrightarrow X)\in\cM$, $m$ is a \emph{SqPO-admissible match of $R$ into $X$}, denoted $m\in\MatchGT{SqPO}{R}{X}$, if $m\vDash\ac{c}_I$, in which case $(A)$ is constructed as a \emph{final pullback complement} and $(B)$ as a \emph{pushout}.
    \item $\mathbf{\bT=DPO^{\dag}}$: given just the ``plain rule'' $r$ and $(m^{*}:O\hookrightarrow Y)\in\cM$, $m^{*}$ is a \emph{DPO${}^{\dag}$-admissible match of $r$ into $X$}, denoted $m\in\MatchGT{DPO^{\dag}}{r}{Y}$, if $(B)$ is constructable as a \emph{pushout complement}, in which case $(B)$ is constructed as a \emph{pushout}.
\end{enumerate}
For types $\bT\in\{DPO,SqPO\}$, we will sometimes employ the notation $R_m(X)$ for the object $Y$.
\end{definition}

\begin{definition}\label{def:rRel}
For rewriting of type $\bT\in \{DPO,SqPO\}$, $X_0\in \obj{\bfC}_{\cong}$ and a set of rules $\cR=\{R_j\}_{j=1}^n$, we recursively define a family of \emph{reachability relations} $\{\Rightarrow_{(i)}\}_{i\geq 0}$ on $\obj{\bfC}_{\cong}^{\times\:2}$ as
\begin{equation}
\begin{aligned}
X\Rightarrow_{(1)}Y\; 
&\text{iff } \exists R\in\cR\,, m\in \cM^{\bT}_{R}(X):Y\cong R_m(X)\\
\forall n\geq 1:\quad X\Rightarrow_{(i+1)}Y
\; &\text{iff } \exists Z\in \obj{\bfC}_{\cong}\,, R\in\cR\,, m\in \cM^{\bT}_{R}(Z):Y\cong R_m(Z) \land X\Rightarrow_{(i)}Z\,.
\end{aligned}
\end{equation}

\end{definition}

\begin{definition}[Rule compositions; cf.~\cite{BK2020}, Def.~4]\label{def:Rcomp}
 Let $R_1,R_2\in \LinEq{\bfC}$ be two equivalence classes of rules with conditions, and let $r_j\in\Lin{\bfC}$ and $\ac{c}_{I_j}$ be representatives of $R_j$ (for $j=1,2$). For $\bT\in\{DPO,SqPO\}$, a span $\mu=(I_2\hookleftarrow M_{21}\hookrightarrow O_1)$ of $\cM$-morphisms is a \emph{$\bT$-admissible match of $R_2$ into $R_1$}, denoted $\mu\in \MatchGT{\bT}{R_2}{R_1}$,  if the diagram below is constructable (with $N_{21}$ constructed by taking pushout)
 \begin{equation}\label{eq:defRcomp}
 \inputD{1}{cd-A}
\end{equation}
and if $\ac{c}_{I_{21}}\not{\!\!\dot{\equiv}}\,\,\ac{false}$. Here, the condition $\ac{c}_{I_{21}}$ is computed as (cf.\ \cite{BK2020} for the definitions of $\Shift$ and $\Trans$)
\begin{equation}
    \ac{c}_{I_{21}}:=\Shift(I_1\hookrightarrow I_{21},\ac{c}_{I_1})\;\land\; \Trans(N_{21}\leftharpoonup I_{21},\Shift(I_2\hookrightarrow N_{21},\ac{c}_{I_2}))\,.
\end{equation}
In this case, we define the \emph{type $\bT$ composition of $R_2$ with $R_1$ along $\mu$}, denoted $\compGT{\bT}{R_2}{\mu}{R_1}$, as
\begin{equation}
\compGT{\bT}{R_2}{\mu}{R_1}:=[(O_{21}\leftharpoonup I_{21};\ac{c}_{I_{21}})]_{\sim}\,,
\end{equation}
where $(O_{21}\leftharpoonup I_{21}):=(O_{21}\leftharpoonup N_{21})\circ(N_{21}\leftharpoonup I_{21})$ (with $\circ$ the \emph{span composition} operation).
\end{definition}

\subsection{DPO- and SqPO-type rule algebra frameworks}\label{sec:RAdef}

Referring yet again to~\cite{behrRaSiR,BK2020} for the full technical details, suffice it here to quote the essential definitions:

\begin{definition}[\textbf{Rule algebras}; \cite{BK2020}, Def.~5]
    Let $\bT\in\{DPO,SqPO\}$ be the rewriting type, and let $\bfC$ be a category satisfying the relevant variant of Assumption~\ref{as:main}. Let $\overline{\cR}_{\bfC}$ be an $\bR$-vector space, defined via a bijection $\delta:\LinEq{\bfC}\xrightarrow{\cong}\mathsf{basis}(\overline{\cR}_{\bfC})$ from the set of equivalence classes of linear rules with conditions to the set of basis vectors of $\overline{\cR}_{\bfC}$. Let $\rap{\bT}{}{}$ denote the \emph{type $\bT$ rule algebra product}, a binary operation defined  via its action on basis elements $\delta(R_1),\delta(R_1)\in \overline{\cR}_{\bfC}$ (for $R_1,R_2\in \LinEq{\bfC}$) as
    \begin{equation}
        \rap{\bT}{\delta(R_2)}{\delta(R_1)}:=\sum_{\mu\in \RMatchGT{\bT}{R_2}{R_1}}\delta\left(\compGT{\bT}{R_2}{\mu}{R_1}\right)\,.
    \end{equation}
    We refer to $\overline{\cR}_{\bfC}^{\bT}:=(\overline{\cR}_{\bfC},\rap{\bT}{}{})$ as the \emph{$\bT$-type rule algebra over $\bfC$}.
\end{definition}

\begin{definition}[Representations; \cite{BK2020}, Def.~6]\label{def:RAcanReps}
    Let $\hat{\bfC}$ be defined as the \emph{$\bR$-vector space} whose set of basis vectors is isomorphic to the set of iso-classes of objects of $\bfC$ via a bijection $\ket{.}:\obj{\bfC}_{\cong}\rightarrow \mathsf{basis}(\hat{\bfC})$. Then the \emph{$\bT$-type canonical representation of the $\bT$-type rule algebra over $\bfC$}, denoted $\overline{\rho}^{\bT}_{\bfC}$, is defined as the morphism $\overline{\rho}^{\bT}_{\bfC}:\overline{\cR}_{\bfC}^{\bT}\rightarrow End_{\bR}(\hat{\bfC})$ specified via 
    \begin{equation}
    \forall R\in\LinEq{\bfC},X\in\obj{\bfC}_{\cong}:
    \quad \canRep{\bT}{\delta(R)}\ket{X}:=\sum_{m\in\MatchGT{\bT}{R}{X}}\ket{R_m(X)}\,.
    \end{equation}
\end{definition}

The salient part of the rule-algebra framework in view of computational techniques is the following result on the interaction of the rule algebra product and the notion of representations:
\begin{theorem}[\cite{BK2020}, Thm.~3]\label{thm:CanRepProps}
    $\overline{\rho}^{\bT}_{\bfC}$ is an \emph{algebra homomorphism}, i.e.\ (for $\rho\equiv\rho_{\bfC}^{\bT}$)
    \begin{equation}
    (i)\quad \rho(\delta(R_{\mIO}))=Id_{End(\hat{\bfC})}\qquad 
    (ii)\quad \rho(\delta(R_2))\rho(\delta(R_1))=\rho\left(\rap{\bT}{\delta(R_2)}{\delta(R_1)}\right)\,.
    \end{equation}
\end{theorem}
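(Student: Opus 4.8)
The plan is to use linearity to reduce both claims to basis vectors, dispatch $(i)$ directly, and recast $(ii)$ as a \emph{compositional concurrency theorem}: a bijection, for every object $X$, between two-step derivation data and one-step (composite-rule) derivation data. Since $\canRep{\bT}{\cdot}$ is defined on the basis $\{\delta(R)\}_{R\in\LinEq{\bfC}}$ and extended $\bR$-linearly, and $\hat{\bfC}$ carries the basis $\{\ket{X}\}_{X\in\obj{\bfC}_{\cong}}$, it suffices to check $(i)$ and $(ii)$ on such basis vectors. For $(i)$ I would argue directly from Definition~\ref{def:RAcanReps}: the empty rule $R_{\mIO}$ has the representative $(\mIO\hookleftarrow\mIO\hookrightarrow\mIO,\ac{true}_{\mIO})$, and because $\bfC$ has an $\cM$-initial object the only $\cM$-morphism $\mIO\hookrightarrow X$ is the initial one, which satisfies $\ac{true}$ vacuously; the induced pushout complement (DPO) resp.\ final pullback complement (SqPO) together with the final pushout reproduce $X$, so $\canRep{\bT}{\delta(R_{\mIO})}\ket{X}=\ket{X}$ for all $X$, i.e.\ $\canRep{\bT}{\delta(R_{\mIO})}=Id_{End(\hat{\bfC})}$.

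For $(ii)$ I would fix $R_1,R_2\in\LinEq{\bfC}$ and $X\in\obj{\bfC}_{\cong}$ and expand both sides using Definition~\ref{def:RAcanReps} and the rule-algebra product from Definition~\ref{def:Rcomp}. On the left, $\canRep{\bT}{\delta(R_2)}\,\canRep{\bT}{\delta(R_1)}\ket{X}$ becomes the formal sum, over all pairs $(m_1,m_2)$ with $m_1\in\MatchGT{\bT}{R_1}{X}$ and $m_2\in\MatchGT{\bT}{R_2}{(R_1)_{m_1}(X)}$, of the vectors $\ket{(R_2)_{m_2}((R_1)_{m_1}(X))}$. On the right, $\canRep{\bT}{\rap{\bT}{\delta(R_2)}{\delta(R_1)}}\ket{X}$ becomes the formal sum, over all pairs $(\mu,n)$ with $\mu\in\RMatchGT{\bT}{R_2}{R_1}$ and $n\in\MatchGT{\bT}{\compGT{\bT}{R_2}{\mu}{R_1}}{X}$, of the vectors $\ket{(\compGT{\bT}{R_2}{\mu}{R_1})_n(X)}$. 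Thus $(ii)$ is equivalent to exhibiting, for each $X$, a bijection between these two index sets under which the associated output objects agree up to isomorphism.

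The bijection itself I would build from the universal properties underlying the chosen rewriting type. Starting from $(m_1,m_2)$: applying $R_1$ along $m_1$ yields the direct-derivation diagram~\eqref{eq:DD} with comatch $O_1\hookrightarrow(R_1)_{m_1}(X)$; pulling this comatch back against $m_2:I_2\hookrightarrow(R_1)_{m_1}(X)$ produces a span $\mu=(I_2\hookleftarrow M_{21}\hookrightarrow O_1)$, and then the pasting/decomposition lemmas for pushout squares (in the DPO case, via the van Kampen property of $\cM$-adhesive categories) resp.\ for final pullback complements (in the SqPO case) show that $\mu$ is $\bT$-admissible, that the glued two-step diagram is precisely the composition diagram~\eqref{eq:defRcomp} defining $\compGT{\bT}{R_2}{\mu}{R_1}$, and that its outer part simultaneously exhibits a match $n:I_{21}\hookrightarrow X$ together with the data witnessing its $\bT$-admissibility, with $(\compGT{\bT}{R_2}{\mu}{R_1})_n(X)\cong(R_2)_{m_2}((R_1)_{m_1}(X))$. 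Conversely, from $(\mu,n)$ I would apply the composite rule along $n$ and re-cut the resulting diagram along the embedded interfaces $I_1$ and $I_2$, recovering a pair $(m_1,m_2)$; the two assignments are mutually inverse because pushout complements (DPO) and final pullback complements (SqPO) are unique up to isomorphism under Assumption~\ref{as:main}, the latter also using that $\cM$-morphisms are stable under FPCs.

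Finally I would reconcile the application conditions: one checks that the recovered data $(m_1,m_2)$ satisfies $m_1\vDash\ac{c}_{I_1}$ and that the induced match of $R_2$ satisfies $\ac{c}_{I_2}$ exactly when $n\vDash\ac{c}_{I_{21}}$, which follows from the defining properties of $\Shift$ (transport of conditions along $\cM$-morphism extensions) and $\Trans$ (transport along the pushout/pushout-complement of the gluing) entering the definition of $\ac{c}_{I_{21}}$, while the non-degeneracy requirement on $\ac{c}_{I_{21}}$ precisely discards the pairs $(\mu,n)$ admitting no valid match. I expect the main obstacle to be this type-dependent diagrammatic bookkeeping: for DPO it rests on the well-behaved composition and decomposition of pushout squares, whereas for SqPO the corresponding statements about final pullback complements are substantially more delicate and depend essentially on the extra hypotheses of Assumption~\ref{as:main} (existence of FPCs along composable pairs of $\cM$-morphisms and their $\cM$-stability). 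Once this ``associativity of gluing'' is established, the bijection — and hence the algebra-homomorphism property — follows; a fully detailed version of the argument is carried out in~\cite{BK2020,behrRaSiR}.
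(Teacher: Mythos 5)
Your proposal is correct and follows essentially the intended route: the paper itself imports this theorem from \cite{BK2020} without reproducing a proof, and your outline (linearity reduction to basis vectors, the $\cM$-initial-object argument for $(i)$, and the concurrency-theorem bijection between two-step derivation data $(m_1,m_2)$ and composite-rule derivation data $(\mu,n)$ for $(ii)$, with conditions reconciled via $\Shift$ and $\Trans$) is precisely the argument carried out in \cite{BK2020,behrRaSiR}. The only point worth flagging is that the $\cM$-effective-unions axiom of Assumption~\ref{as:main} is what guarantees that the mediating morphism from the pushout object $N_{21}$ into the intermediate object lies in $\cM$, which is the step that makes your ``re-cutting'' direction of the bijection well-defined.
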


\section{Stochastic rewriting systems}\label{sec:SM}

For either DPO- or SqPO-type rewriting semantics over a category $\bfC$ satisfying Assumption~\ref{as:main} a \emph{stochastic rewriting system (SRS)} is specified via providing the following pieces of data:
\begin{enumerate}
\item An \emph{input state}, i.e.\ a \emph{probability distribution} indexed by isomorphism classes of objects of $\bfC$:
\begin{equation}
	\ket{\Psi_0}=\sum_{X\in \obj{\bfC}_{\cong}} p_X(0) \ket{X}
\end{equation}
\item A \emph{set of transitions} $\{(\kappa_j,R_j)\}_{j=1}^n$, whence a set of (finitely many) pairs of \emph{base rates} $\kappa_j\in \bR_{>0}$ and \emph{linear rules with conditions} $R_j\in \LinAc{\bfC}$, with $R_j=(r_j,\ac{c}_{I_j})$.
\end{enumerate}

\begin{example} Consider the case where $\bfC=\mathbf{FinSet}$ (the finitary restriction of the category of sets and set functions), with an input state a ``pure state'' $\ket{\Psi_0}=\ket{\mIO}$ (i.e.\ a probability distribution with probability one for the state associated to the empty set), and with a transition set given by $\{(\kappa_{+},R_{+}),(\kappa_{-}, R_{-})\}$, where $R_{+}:=(\bullet\leftarrow \mIO\rightarrow\mIO, \ac{c}_{+})$ and $R_{-}:=(\mIO\leftarrow \mIO\rightarrow \bullet,\ac{c}_{-})$. If $\ac{c}_{+}=\ac{c}_{-}=\ac{true}$, we reproduce precisely a classical \emph{birth-death process} (with initial state the state with no ``particles''). However, we may as well choose non-trivial conditions, for example $\ac{c}_{+}^{(3)}=\exists(\mIO\hookrightarrow \bullet\;\bullet\;\bullet)$, in which case we would obtain a system in which once the number of vertices falls below $3$ vertices, the ``death'' transition will with probability $1$ eventually delete all remaining vertices.
\end{example}

Back to the general stochastic mechanics framework, we assemble from the input data the definition of a CTMC as follows (with $\rho\equiv \rho_{\bfC}^{\bT}$, $\bT\in\{DPO,SqPO\}$ and $\hat{\bO}_{\bT}$ as in~\eqref{eq:JCgen} of Section~\ref{sec:obs} below):
\begin{equation}\label{eq:defCTMC}
\begin{aligned}
\tfrac{d}{dt}\ket{\Psi(t)}&= H\ket{\Psi(t)}\,,\quad \ket{\Psi(0)}=\ket{\Psi_0}\\
H&:= \hat{H}+H_{\bO}\,,\quad \hat{H}=\rho(h)\,,\; H_{\bO}=-\hat{\bO}_{\bT}(h)\,,\quad
h=\sum_{j=1}^n \kappa_j\rho(\delta(R_j))\,.
\end{aligned}
\end{equation}
Compared to the traditional CTMC literature~\cite{norris}, note that $\hat{H}$ is a linear operator with \emph{off-diagonal} non-negative entries only, while $H_{\bO}$ is a diagonal linear operator with non-positive entries. In order to demonstrate that $H$ indeed qualifies as a so-called \emph{infinitesimal stochastic operator} (also referred to as conservative stable $Q$-matrix), note that by virtue of the jump-closure property (cf.\ \eqref{eq:JCgen} in Section~\ref{sec:obs}), 
\begin{equation}\label{eq:Hprop}
\bra{}H = \bra{}(\hat{H}+H_{\bO})=\bra{}(\rho(h)-\bO(h)) = 0_{\bR}\,.
\end{equation}
In other words, the above equation expresses that each row of $H$ sums to zero, which together with the (non-) negativity of the off- and on-diagonal entries confirms that $H$ is a valid infinitesimal CTMC generator.

\subsection{Pattern count observables}\label{sec:obs}

Unlike in the setting of chemical reaction systems or, equivalently, of rewriting systems over vertex-only graphs, where certain combinatorial techniques exist to directly compute the full solution for $\ket{\Psi(t)}$ (cf.\ e.g.\ \cite{bdg2019}), as it stands the definition of a CTMC for a stochastic rewriting system provided in~\eqref{eq:defCTMC} does not lend itself easily in its own right for practical computations. Intuitively, for a stochastic rewriting system such as the one we will study in detail in Section~\ref{sec:PRBTs} evolving over planar rooted binary trees (PRBTs), the set of isomorphism classes reachable even in a very small number of transitions and given a ``pure'' input state $\ket{\Psi_0}=\ket{X_0}$ (for some iso-class $X_0$) can be of an enormous size and complexity. For instance, there are already $\tfrac{200!}{100!}\approx 10^{217}$ isomorphism classes of PRBTs reachable in just $100$ iterations of the R\'{e}my generator studied in Section~\ref{sec:PRBTs}. Since CTMC theory describes the time-evolution of the probability distribution $\ket{\Psi(t)}$ over \emph{all} states reachable by the transitions from the input state, it is in practice very often entirely infeasible to aim for a direct calculation of $\ket{\Psi(t)}$ itself.

The remedy for the aforementioned conceptual problem is the introduction of the concept of \emph{pattern count observables}. Referring to~\cite{BK2020} for the precise derivation, depending on the rewriting semantics $\bT\in\{DPO,SqPO\}$ utilized, let us introduce the following definitions:
\begin{equation}\label{eq:defSqPOjc}
\hat{O}_{P,q;\ac{c}_P}
:= \rho_{\bfC}^{DPO}\left(\delta\left(P\xleftarrow{q}Q\xrightarrow{q}P, \ac{c}_P\right)\right)\,,\qquad
\hat{O}_{P;\ac{c}_P}:= \rho_{\bfC}^{SqPO}\left(\delta\left(P\xleftarrow{id}P\xrightarrow{id}P, \ac{c}_P\right)\right)
\end{equation}
To better understand the meaning of the above definitions, it is important to note the so-called \emph{jump-closure properties} of DPO- and SqPO-types, respectively (cf.\ \cite[Thm.~4]{BK2020}):
\begin{equation}\label{eq:JCgen}
\begin{aligned}
\forall R=(O\xleftarrow{o}{\color{h2color}K\xrightarrow{i}}{\color{h1color}I},{\color{h4color}\ac{c}_I})\in \LinAc{\bfC}:\quad 
\bra{}\rho_{\bfC}^{\bT}(\delta(R))
&=\bra{}\hat{\bO}_{\bT}(\delta(R))\\
\hat{\bO}_{DPO}(\delta(R))&:= \hat{O}_{{\color{h1color}I},{\color{h2color}i};{\color{h4color}\ac{c}_I}}\,,\quad 
\hat{\bO}_{SqPO}(\delta(R)):= \hat{O}_{{\color{h1color}I};{\color{h4color}\ac{c}_I}}\,. 
\end{aligned}
\end{equation}
In other words, $\hat{O}_{{\color{h1color}I},{\color{h2color}i};{\color{h4color}\ac{c}_I}}$ and $\hat{O}_{{\color{h1color}I};{\color{h4color}\ac{c}_I}}$ permit to \emph{count} the number of matches of the rule $R=(O\xleftarrow{o}{\color{h2color}K\xrightarrow{i}}{\color{h1color}I},{\color{h4color}\ac{c}_I})$ in DPO- and SqPO-semantics, respectively. More explicitly, we find that
\begin{equation}\label{eq:jcExplicit}
\begin{aligned}
\bra{}\hat{O}_{{\color{h1color}I},{\color{h2color}k};{\color{h4color}\ac{c}_I}}\ket{X} &= 
\vert \cM_{({\color{h1color}I}{\color{h2color}\xleftarrow{i}K\xrightarrow{i}}{\color{h1color}I},{\color{h4color}\ac{c}_I})}^{DPO}(X)\vert
&\!\!\!\!\!\!\!\!&= 
\vert \cM_{(O\xleftarrow{o}{\color{h2color}K\xrightarrow{i}}{\color{h1color}I},{\color{h4color}\ac{c}_I})}^{DPO}(X)\vert\\
\bra{}\hat{O}_{{\color{h1color}I};{\color{h4color}\ac{c}_I}}\ket{X} &= 
\vert \cM_{({\color{h1color}I}\xleftarrow{id}{\color{h1color}I}\xrightarrow{id}{\color{h1color}I},{\color{h4color}\ac{c}_I})}^{SqPO}(X)\vert
&\!\!\!\!\!\!\!\!&=\vert \cM_{(O\xleftarrow{o}{\color{h2color}K\xrightarrow{i}}{\color{h1color}I},{\color{h4color}\ac{c}_I})}^{SqPO}(X)\vert\,.
\end{aligned}
\end{equation}

\begin{example}
The simplest type of observables encountered in practice are the ``plain'' pattern-counting observables\footnote{Here, $=$ is the equality of linear operators, whereby two linear operators are equal if they agree in every matrix entry.} %
$\hat{O}_{\color{h1color}P}=\hat{O}_{{\color{h1color}P},id_{\color{h1color}P};\ac{true}}=\hat{O}_{{\color{h1color}P};\ac{true}}$, with typical examples including $\hat{O}_{\bullet}$ (counting \emph{vertices}), $\hat{O}_{\bullet\;\bullet}$ (counting \emph{pairs of vertices}) and $\hat{O}_{\bullet\!\!-\!\!\bullet}$ (counting \emph{(undirected) edges}). In contrast, for example in DPO rewriting the variant %
$\hat{O}_{{\color{h1color}\bullet},{\color{h2color}\mIO\hookrightarrow \bullet}; {\color{h4color}\ac{true}}}$ effectively counts vertices that are not linked to any other vertices via incident edges, whence by~\eqref{eq:jcExplicit} in particular %
$\hat{O}_{{\color{h1color}\bullet},{\color{h2color}\mIO\hookrightarrow \bullet};{\color{h4color}\ac{true}}} \neq \hat{O}_{{\color{h1color}\bullet};{\color{h4color}\ac{true}}}$, while in both DPO- and SqPO-rewriting,
\[
\hat{O}_{{\color{h1color}\bullet\;\bullet},{\color{h2color}\bullet\;\bullet\hookrightarrow \bullet\;\bullet};{\color{h4color}\neg\exists(\bullet\;\bullet\hookrightarrow \bullet\!\!-\!\!\bullet)}}
=\hat{O}_{{\color{h1color}\bullet\;\bullet};{\color{h4color}\neg\exists(\bullet\;\bullet\hookrightarrow \bullet\!\!-\!\!\bullet)}}
\]
yields a linear operator that in effect counts pairs of vertices not linked by an edge.
\end{example}

\subsubsection{Exponential moment-generating functions}

As advocated in~\cite{bdg2016,bp2019-ext,nbSqPO2019,bdg2019,BK2020}, observables are a crucial concept in rule-algebraic rewriting theory, since it will often be the case that one considers distributions over isomorphism classes of objects in the given category, which due to the typically enormous number of such classes cannot be computed explicitly, or would even be insensible to compute. Instead, choosing a set of pattern-counting observables permits to extract \emph{partial information} from the distributions, which may then provide important insights on the behavior of the rewriting system. 

\begin{definition}\label{def:EMGF}
Given a stochastic rewriting system as defined in~\eqref{eq:defCTMC}, and for a finite set of observables $\{\hat{O}_1,\dotsc\hat{O}_m\}$, the \emph{exponential moment-generating function (EMGF)} $\cM(t;\vec{\omega})$ is defined as
\begin{equation}\label{eq:defEMGF}
\cM(t;\vec{\omega}):=\bra{}e^{\vec{\omega}\cdot\vec{\hat{O}}}\ket{\Psi(t)}\,.
\end{equation} 
Here, we employed the shorthand notation $\vec{\omega}\cdot\vec{\hat{O}}:=\sum_{j=1}^m\omega_j\hat{O}_j$, with $\omega_1,\dotsc,\omega_m$ \emph{formal variables}.
\end{definition}

The interpretation of $\cM(t;\vec{\omega})$ is that it serves as a form of bookkeeping device for all (joint) moments of the chosen observables, in the sense that one may expand~\eqref{eq:defEMGF} into a formal power series of the form
\begin{equation}\label{eq:EMGFcoeffs}
\cM(t;\vec{\omega})=\sum_{\vec{k}\geq \vec{0}}\tfrac{\vec{\omega}^{\vec{k}}}{\vec{k!}}\bra{}\vec{\hat{O}}^{\vec{k}}\ket{\Psi(t)}\,,\quad \vec{x}^{\vec{k}}\equiv x_1^{k_1}\cdots x_m^{k_m}\,.
\end{equation}
As a caveat, one should note that it depends strongly on the chosen rewriting system whether or not the formula for $\cM(t;\vec{\omega})$ is mathematically well-posed, in the sense that all moments exist and are finite. Since currently no general theory for providing conditions on the rewriting system that ensure these properties, one must resort in practice to either simulation data or to other empirical methods in order to (at least approximately) verify the well-posedness. Setting these technical issues aside, the main motivation for studying exponential moment-generating functions is the fact that if indeed all moments exist and are finite, there exists a Legendre transform from $\cM(t;\vec{\omega})$ to the \emph{probability-generating function} $P(t;\vec{x})$ (i.e.\ a change of variables $\omega_j\to \ln x_j$ for $j=1,\dotsc,m$):
\begin{equation}\label{eq:LGT}
P(t;\vec{x}):=\cM(t;\underline{\ln x})\,.
\end{equation}

\begin{lemma}
The formal power series $P(t;\vec{x})$ is the \emph{ordinary probability-generating function} for the counts $n_1,\dotsc,n_m$ of the observables $\hat{O}_1,\dotsc,\hat{O}_m$, which entails that with $n_j(X):=\bra{}\hat{O}_j\ket{X}$ (for $j=1,\dotsc,m$),
\begin{equation}\label{eq:PGF}
P(t;\vec{x})=\sum_{\vec{n}\geq \vec{0}}\vec{x}^{\vec{n}} p_{\vec{n}}(t)\,,\quad
p_{\vec{n}}(t):=Pr(\{X\in \obj{\bfC}_{\cong}\mid \vec{n(X)}=\vec{n} \}\;\vert \text{at time $t$})\,.
\end{equation}
\end{lemma}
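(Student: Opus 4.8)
The plan is to unwind the definitions of $\cM(t;\vec\omega)$ and of the Legendre/change-of-variables substitution, and to identify the resulting coefficients with the probabilities $p_{\vec n}(t)$. First I would recall that, by Definition~\ref{def:EMGF} and the expansion~\eqref{eq:EMGFcoeffs}, the EMGF is the formal series
\begin{equation*}
\cM(t;\vec\omega)=\bra{}e^{\vec\omega\cdot\vec{\hat O}}\ket{\Psi(t)}
=\sum_{X\in\obj{\bfC}_{\cong}}p_X(t)\,\bra{}e^{\vec\omega\cdot\vec{\hat O}}\ket{X}\,,
\end{equation*}
using linearity of $\bra{}\cdot$ and the definition $\ket{\Psi(t)}=\sum_X p_X(t)\ket{X}$. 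The crucial observation is that all the operators $\hat O_j$ are pattern-count observables, hence (as recalled in Section~\ref{sec:obs}) each $\hat O_j$ is diagonal in the basis $\{\ket X\}$ with eigenvalue $n_j(X)=\bra{}\hat O_j\ket X$ on $\ket X$; moreover the $\hat O_j$ mutually commute, so $\bra{}e^{\vec\omega\cdot\vec{\hat O}}\ket X=e^{\vec\omega\cdot\vec{n}(X)}=\prod_{j=1}^m e^{\omega_j n_j(X)}$.

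Next I would perform the substitution $\omega_j\mapsto\ln x_j$ prescribed by~\eqref{eq:LGT}. This turns each factor $e^{\omega_j n_j(X)}$ into $x_j^{n_j(X)}$, so that
\begin{equation*}
P(t;\vec x)=\cM(t;\underline{\ln x})
=\sum_{X\in\obj{\bfC}_{\cong}}p_X(t)\,\vec x^{\,\vec n(X)}\,.
\end{equation*}
Finally I would reorganise this sum by grouping the iso-classes $X$ according to the value of the vector of counts $\vec n(X)\in\bZ_{\geq0}^{\,m}$: for each fixed $\vec n$, the inner sum $\sum_{X:\vec n(X)=\vec n}p_X(t)$ is by definition of $\ket{\Psi(t)}$ exactly the total probability mass at time $t$ of the event $\{X\mid \vec n(X)=\vec n\}$, i.e.\ the quantity $p_{\vec n}(t)$ in~\eqref{eq:PGF}. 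Collecting terms yields $P(t;\vec x)=\sum_{\vec n\geq\vec 0}\vec x^{\,\vec n}p_{\vec n}(t)$, which is precisely the claimed ordinary probability-generating function identity, and since $\sum_{\vec n}p_{\vec n}(t)=1$ the coefficients indeed form a probability distribution.

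The only genuinely delicate point is the well-posedness caveat already flagged before the lemma: the manipulations above are purely formal rearrangements of (possibly infinite) series, so strictly speaking the statement should be read at the level of formal power series in $\vec x$, or else one must assume the convergence/finiteness of all the moments (equivalently, that the family $\{p_X(t)\}$ is summable against $\vec x^{\,\vec n(X)}$ in the relevant domain). I would therefore state explicitly that the identity is an identity of formal power series, so that the regrouping of the sum over $X$ into a sum over $\vec n$ is justified coefficient-wise — each coefficient of $\vec x^{\,\vec n}$ receives contributions only from the (a priori countably many) iso-classes with $\vec n(X)=\vec n$, and no analytic interchange of limits is needed at that level. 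With that convention the proof is essentially the three-line computation above; the substantive content is entirely the diagonality of pattern-count observables, which is what makes $e^{\vec\omega\cdot\vec{\hat O}}$ act as the monomial $\vec x^{\,\vec n(X)}$ after the change of variables.
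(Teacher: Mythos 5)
Your proposal is correct and follows essentially the same route as the paper's proof: expand $\ket{\Psi(t)}$ in the basis $\{\ket{X}\}$, use the diagonality of the pattern-count observables to evaluate $\bra{}\vec{x}^{\vec{\hat{O}}}\ket{X}=\vec{x}^{\vec{n(X)}}$, and then repartition the sum over iso-classes into a sum over count vectors $\vec{n}$, which is exactly the paper's argument. Your added remarks on diagonality, commutativity and the formal-power-series reading of the regrouping only make explicit what the paper leaves implicit.
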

\begin{proof}
Starting from the definition of $\cM(t;\vec{\omega})$ and~\eqref{eq:LGT},
\begin{equation}
P(t;\vec{x})=\cM(t;\vec{\ln x})=\bra{}\vec{x}^{\vec{\hat{O}}}\ket{\Psi(t)}
= \sum_{X\in \obj{\bfC}_{\cong}}p_X(t)\bra{}\vec{x}^{\vec{\hat{O}}}\ket{X}
= \sum_{X\in \obj{\bfC}_{\cong}}p_X(t)\vec{x}^{\vec{n(X)}}\braket{}{X}\,.
\end{equation}
Since by definition $\braket{}{X}=1_{\bR}$, this is almost of the form in~\eqref{eq:PGF}, except that the summation is not over the values of observable counts, but instead over the possible states (i.e.\ isomorphism classes of objects). However, as is a well-known technique in the combinatorics literature~\cite{FlajoletSedgewick}, since $\ket{\Psi(t)}$ is a probability distribution over a \emph{countable} space of states (i.e.\ of isomorphism classes of objects of $\bfC$), the remaining argument is a simple re-partition of the probability distribution of the form
\begin{equation}
\sum_{X\in \obj{\bfC}_{\cong}}p_X(t)\vec{x}^{\vec{n(X)}}
=\sum_{\vec{n}\geq \vec{0}}\;\big(
\overbrace{
\sum_{\substack{X\in \obj{\bfC}_{\cong}\\ \vec{n(X)}=\vec{n}}} p_X(t)}^{=: p_{\vec{n}}(t)}\big)\; \vec{x}^{\vec{n}}\,.
\end{equation}
\end{proof}

To summarize this part of the general framework, let us emphasize that the passage from exponential moment-generating functions to probability-generating functions in principle permits to extract some very detailed information from stochastic rewriting systems, with some worked examples of this kind to be found in~\cite{bp2019-ext,nbSqPO2019,bdg2019,BK2020}. Modulo the question of how to practically compute either the exponential moment- or the probability-generating functions, we have thus obtained a method to project the typically extremely large state-spaces to the more tractable subspaces in which the state-space is partitioned into subspaces determined by the values of the counts of a given set of pattern observables.

\subsection{EMGF evolution equations}

The concept of exponential moment-generating functions (EMGFs) for stochastic rewriting systems opens the possibility for a form of \emph{static analysis technique} that aims at computing the time-dependent EMGF from an \emph{evolution equation}, based upon the following key result:
\begin{theorem}[\cite{bdg2019}, Thm.~3.24]
For a CTMC such as in~\eqref{eq:defCTMC}, an EMGF $\cM(t;\vec{\omega})$ as defined in~\eqref{eq:defEMGF} satisfies the \emph{formal evolution equation}
\begin{equation}\label{eq:EMGFfe}
\tfrac{\partial}{\partial t}\cM(t;\vec{\omega})= \sum_{q\geq 1}\tfrac{1}{q!}\bra{}\left(
ad_{\vec{\omega}\cdot\vec{\hat{O}}}^{\:\circ\:q}(\hat{H})
\right)e^{\vec{\omega}\cdot\vec{\hat{O}}}\ket{\Psi(t)}\,,
\end{equation}
where the \emph{adjoint action} $ad$ is recursively defined for any two endomorphisms $A,B\in End(\hat{\bfC})$ as
\begin{equation}
ad_{A}^{\:\circ\:0}(B):= A\,,\; 
\forall q>0:\; ad_{A}^{\:\circ\:(q+1)}(B)=
[A,ad_{A}^{\:\circ\:q}(B) ]\,.
\end{equation}
Here, $[A,B]:=AB-BA$ is referred to in the mathematics literature as the \emph{commutator} of $A$ and $B$.
\end{theorem}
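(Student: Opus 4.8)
The plan is to differentiate the bilinear expression defining $\cM(t;\vec{\omega})$ in time, carry the CTMC generator $H$ across the exponential $e^{\vec{\omega}\cdot\vec{\hat{O}}}$ by conjugation, and then prune the resulting series using the infinitesimal-stochasticity identity~\eqref{eq:Hprop} together with the diagonality of pattern count observables. Throughout I would work at the level of formal power series in $\omega_1,\dotsc,\omega_m$, in keeping with the well-posedness caveats recorded after Definition~\ref{def:EMGF}; the convergence questions are precisely what the word ``formal'' in the statement is meant to set aside.

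First I would differentiate~\eqref{eq:defEMGF} under the pairing and substitute $\tfrac{d}{dt}\ket{\Psi(t)}=H\ket{\Psi(t)}$ from~\eqref{eq:defCTMC}, giving $\tfrac{\partial}{\partial t}\cM(t;\vec{\omega})=\bra{}e^{\vec{\omega}\cdot\vec{\hat{O}}}H\ket{\Psi(t)}$. Next I would insert $Id=e^{-\vec{\omega}\cdot\vec{\hat{O}}}e^{\vec{\omega}\cdot\vec{\hat{O}}}$ immediately before $H$ and expand the conjugation by the Hadamard lemma $e^{A}Be^{-A}=B+\sum_{q\geq 1}\tfrac{1}{q!}\,ad_{A}^{\:\circ\:q}(B)$, turning the right-hand side into
\begin{equation}
\tfrac{\partial}{\partial t}\cM(t;\vec{\omega})=\bra{}H\,e^{\vec{\omega}\cdot\vec{\hat{O}}}\ket{\Psi(t)}+\sum_{q\geq 1}\tfrac{1}{q!}\bra{}\left(ad_{\vec{\omega}\cdot\vec{\hat{O}}}^{\:\circ\:q}(H)\right)e^{\vec{\omega}\cdot\vec{\hat{O}}}\ket{\Psi(t)}\,.
\end{equation}
The first summand equals $\big(\bra{}H\big)e^{\vec{\omega}\cdot\vec{\hat{O}}}\ket{\Psi(t)}$ and vanishes by~\eqref{eq:Hprop}, so only the $q\geq 1$ part remains.

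It then remains to replace $H$ by $\hat{H}$ inside the iterated commutators. Using $H=\hat{H}+H_{\bO}$ with $H_{\bO}=-\hat{\bO}_{\bT}(h)$ as in~\eqref{eq:defCTMC}, the jump-closure identity~\eqref{eq:JCgen} exhibits $H_{\bO}$ as a finite real-linear combination of pattern count observables $\hat{O}_{P;\ac{c}_P}$ (resp.\ $\hat{O}_{P,q;\ac{c}_P}$), and every such observable is \emph{diagonal} in the basis $\{\ket{X}\}_{X\in\obj{\bfC}_{\cong}}$: applying an identity-span rule $(P\xleftarrow{id}P\xrightarrow{id}P,\ac{c}_P)$ — or $(P\xleftarrow{q}Q\xrightarrow{q}P,\ac{c}_P)$ in the DPO case — to $X$ returns $X$ itself, once for every admissible match, so that $\hat{O}_{P;\ac{c}_P}\ket{X}=\big(\bra{}\hat{O}_{P;\ac{c}_P}\ket{X}\big)\ket{X}$, cf.~\eqref{eq:jcExplicit}. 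Hence $H_{\bO}$ and every $\hat{O}_j$, and therefore $\vec{\omega}\cdot\vec{\hat{O}}$, are simultaneously diagonal, so $[\vec{\omega}\cdot\vec{\hat{O}},H_{\bO}]=0$ and $ad_{\vec{\omega}\cdot\vec{\hat{O}}}^{\:\circ\:q}(H_{\bO})=0$ for all $q\geq 1$; by bilinearity of the commutator, $ad_{\vec{\omega}\cdot\vec{\hat{O}}}^{\:\circ\:q}(H)=ad_{\vec{\omega}\cdot\vec{\hat{O}}}^{\:\circ\:q}(\hat{H})$, and substituting back yields~\eqref{eq:EMGFfe}.

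The main obstacle I anticipate is not algebraic but analytic: justifying the interchange of $\partial_t$ with the infinite sum defining $e^{\vec{\omega}\cdot\vec{\hat{O}}}$ and the convergence of the Hadamard expansion, over the possibly infinite-dimensional space $\hat{\bfC}$ on which $H$ need not be bounded. My proposal is to handle these points formally — as the theorem's phrasing and the discussion around~\eqref{eq:EMGFcoeffs} already do — and to defer to the cited works for the subclasses of rewriting systems in which honest convergence can be verified; the one genuinely substantive step is then the diagonality of pattern count observables.
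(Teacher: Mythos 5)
Your proposal is correct and follows essentially the same route as the cited proof of this theorem in~\cite{bdg2019} (the present paper only quotes the result): differentiate under the pairing, conjugate $H$ through $e^{\vec{\omega}\cdot\vec{\hat{O}}}$ via the Hadamard lemma, kill the $q=0$ term with $\bra{}H=0_{\bR}$ from~\eqref{eq:Hprop}, and discard $ad_{\vec{\omega}\cdot\vec{\hat{O}}}^{\:\circ\:q}(H_{\bO})$ for $q\geq 1$ because $H_{\bO}$ and the pattern-count observables are simultaneously diagonal. The only quibbles are cosmetic: the identity $e^{-\vec{\omega}\cdot\vec{\hat{O}}}e^{\vec{\omega}\cdot\vec{\hat{O}}}$ is inserted \emph{after} $H$ (between $H$ and $\ket{\Psi(t)}$), not before it, and your argument tacitly uses the standard convention $ad_A^{\:\circ\:0}(B)=B$ rather than the (evidently mistyped) $ad_A^{\:\circ\:0}(B)=A$ appearing in the statement — both of which are consistent with your final displayed equation.
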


As it stands, the formal evolution equation in~\eqref{eq:EMGFfe} is not of any computational value, since a priori it is not possible to evaluate concretely the infinite series of coefficients. However, there exists a criterion whose satisfaction permits to proceed further:
\begin{definition}[\cite{bdg2019}, Def.~4.1]
A set of observables $\hat{O}_1,\dotsc,\hat{O}_m$ (for finite $m$) is \emph{polynomially jump-closed} with respect to a CTMC as in~\eqref{eq:defCTMC} if the following conditions hold true:
\begin{equation}\label{eq:PJC}
(\mathsf{PJC})\quad \forall q\in \bZ_{>0}:\exists \vec{N(n)}\in \bZ_{\geq0}^m,\gamma_q(\vec{\omega},\vec{k})\in \bR:\; 
\bra{}ad_{\vec{\omega}\cdot\vec{\hat{O}}}^{\:\circ\:q}(\hat{H}) 
=\sum_{\vec{k}=\vec{0}}^{\vec{N(q)}}\gamma_{\vec{k}}(\vec{\omega},\vec{k})\bra{}\vec{\hat{O}}^{\vec{k}}\,.
\end{equation}
\end{definition}

Polynomial jump-closure has the benefit of permitting to transform the formal evolution equation of~\eqref{eq:EMGFfe} into a computationally much more accessible evolution equation on formal power series, a process referred to in~\cite{bdg2019} as \emph{combinatorial conversion}:
\begin{theorem}[\cite{bdg2019}, Thm.~4.2]\label{thm:CC}
Let $\hat{O}_1,\dotsc,\hat{O}_m$ be a (finite) set of pattern observables that is polynomially jump-closed with respect to a CTMC as in~\eqref{eq:defCTMC}. Then the EMGF $\cM(t;\vec{\omega})$ satisfies the \emph{evolution equation}
\begin{equation}\label{eq:CCthm}
\tfrac{\partial}{\partial t}\cM(t;\vec{\omega})=\bD(\vec{\omega},\vec{\partial\omega}) \cM(t;\vec{\omega})\,,\quad \bD(\vec{\omega},\vec{\partial\omega}):=
\left.\left(
\bra{} e^{ad_{\vec{\omega}\cdot\vec{\hat{O}}}}(H)
\right)\right\vert_{\vec{\hat{O}}\mapsto \vec{\partial\omega}}
\end{equation}
\end{theorem}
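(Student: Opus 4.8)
The plan is to start from the formal evolution equation~\eqref{eq:EMGFfe} and to convert its right-hand side, summand by summand, into a single differential operator in the formal variables $\vec{\omega}$ acting on $\cM(t;\vec{\omega})$. The whole argument rests on one structural feature: each pattern-count observable $\hat{O}_j$ is a \emph{diagonal} linear operator, acting on a basis vector $\ket{X}$ by multiplication with the number of admissible matches (cf.~\eqref{eq:defSqPOjc} and~\eqref{eq:jcExplicit}); hence the $\hat{O}_j$ mutually commute and commute with the diagonal operator $H_{\bO}$. As an immediate consequence $\partial_{\omega_j}e^{\vec{\omega}\cdot\vec{\hat{O}}}=\hat{O}_j\,e^{\vec{\omega}\cdot\vec{\hat{O}}}$, so that $\vec{\partial\omega}^{\vec{k}}e^{\vec{\omega}\cdot\vec{\hat{O}}}=\vec{\hat{O}}^{\vec{k}}e^{\vec{\omega}\cdot\vec{\hat{O}}}$ for every multi-index $\vec{k}$, and therefore
\[
\bra{}\vec{\hat{O}}^{\vec{k}}e^{\vec{\omega}\cdot\vec{\hat{O}}}\ket{\Psi(t)}=\vec{\partial\omega}^{\vec{k}}\,\bra{}e^{\vec{\omega}\cdot\vec{\hat{O}}}\ket{\Psi(t)}=\vec{\partial\omega}^{\vec{k}}\,\cM(t;\vec{\omega})\,,
\]
the derivatives being free to move outside the pairing because $\vec{\hat{O}}$ depends neither on $t$ nor on $\vec{\omega}$.

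Next I would feed in the polynomial jump-closure hypothesis $(\mathsf{PJC})$ of~\eqref{eq:PJC}: for each $q\geq 1$ it presents $\bra{}ad_{\vec{\omega}\cdot\vec{\hat{O}}}^{\:\circ\:q}(\hat{H})$ as a finite polynomial $\sum_{\vec{k}=\vec{0}}^{\vec{N(q)}}\gamma_{q,\vec{k}}(\vec{\omega})\,\bra{}\vec{\hat{O}}^{\vec{k}}$ in the observables, the monomials $\vec{\hat{O}}^{\vec{k}}$ being unambiguous precisely because the $\hat{O}_j$ commute. Substituting this into~\eqref{eq:EMGFfe} and applying the displayed identity to every term gives
\[
\tfrac{\partial}{\partial t}\cM(t;\vec{\omega})=\left(\sum_{q\geq 1}\tfrac{1}{q!}\sum_{\vec{k}=\vec{0}}^{\vec{N(q)}}\gamma_{q,\vec{k}}(\vec{\omega})\,\vec{\partial\omega}^{\vec{k}}\right)\cM(t;\vec{\omega})\,.
\]
It then remains to recognize the bracketed operator as $\bD(\vec{\omega},\vec{\partial\omega})$. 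By construction $\sum_{q\geq 1}\tfrac{1}{q!}\sum_{\vec{k}}\gamma_{q,\vec{k}}(\vec{\omega})\,\vec{\hat{O}}^{\vec{k}}=\sum_{q\geq 1}\tfrac{1}{q!}\bra{}ad_{\vec{\omega}\cdot\vec{\hat{O}}}^{\:\circ\:q}(\hat{H})$; and a short computation using $\bra{}H=0$ from~\eqref{eq:Hprop} together with $ad_{\vec{\omega}\cdot\vec{\hat{O}}}^{\:\circ\:q}(H_{\bO})=0$ for $q\geq 1$ (diagonality of $H_{\bO}$) shows that this series equals $\bra{}e^{ad_{\vec{\omega}\cdot\vec{\hat{O}}}}(H)$ — this is also the point where the theorem's $H$ reconciles with the $\hat{H}$ appearing in~\eqref{eq:EMGFfe} and in $(\mathsf{PJC})$. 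Performing the substitution $\vec{\hat{O}}\mapsto\vec{\partial\omega}$, which is meaningful because partial derivatives commute exactly as the observables do, turns this covector-valued polynomial into $\bD(\vec{\omega},\vec{\partial\omega})$ as defined in~\eqref{eq:CCthm}, finishing the argument.

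I expect the main obstacle to be not the algebra, which is essentially bookkeeping once $(\mathsf{PJC})$ is granted, but the analytic status of the manipulations: interchanging the infinite sum over $q$ with the pairing against $\ket{\Psi(t)}$ and with the $\vec{\omega}$-derivatives, and the existence and finiteness of all the joint moments packaged in $\cM(t;\vec{\omega})$. In the formal-power-series setting adopted throughout this section these steps are harmless (and any non-uniqueness in the polynomial representative of $\bra{}e^{ad_{\vec{\omega}\cdot\vec{\hat{O}}}}(H)$ is irrelevant, since it disappears upon pairing with $e^{\vec{\omega}\cdot\vec{\hat{O}}}\ket{\Psi(t)}$); the genuinely load-bearing ingredient is $(\mathsf{PJC})$ itself, which is exactly what collapses the a priori unmanageable infinite family of adjoint-action coefficients in~\eqref{eq:EMGFfe} into a differential operator of finite order.
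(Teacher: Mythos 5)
Your argument is correct and is essentially the intended derivation: the paper itself imports this theorem from~\cite{bdg2019} without reproducing a proof, and your reconstruction — starting from the formal evolution equation~\eqref{eq:EMGFfe}, using diagonality and mutual commutativity of the pattern-count observables to trade $\bra{}\vec{\hat{O}}^{\vec{k}}e^{\vec{\omega}\cdot\vec{\hat{O}}}\ket{\Psi(t)}$ for $\vec{\partial\omega}^{\vec{k}}\cM(t;\vec{\omega})$, and invoking $(\mathsf{PJC})$ to collapse each adjoint-action term into a finite polynomial — is precisely how the cited result is established. Your reconciliation of $\hat{H}$ with $H$ via $\bra{}H=0$ for the $q=0$ term and $ad_{\vec{\omega}\cdot\vec{\hat{O}}}^{\:\circ\:q}(H_{\bO})=0$ for $q\geq 1$ is the right observation, as is the remark that the choice of polynomial representative in the substitution $\vec{\hat{O}}\mapsto\vec{\partial\omega}$ is immaterial after pairing.
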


Referring to~\cite{bdg2019,BK2020} for a number of applications of the above theorem, suffice it here for brevity to mention that the theorem in favorable cases permits to establish a direct contact between techniques introduced in applied research fields such as network, physical and social sciences on the one hand, and rewriting-theoretic methods on the other hand.

\section{New horizons: Weighted combinatorial species and embedded DTMCs}\label{sec:WCSsandDTMCs}

We envision that the rich theory of analytical combinatorics~\cite{FlajoletSedgewick} with its numerous computational methods for reasoning about pattern counts in combinatorial structures and in particular about their asymptotic behavior in the limit of large structure sizes could harbor an enormous potential also for reasoning about stochastic rewriting systems. For instance, certain aspects of combinatorial theory parallel surprisingly closely the notions of $\cM$-adhesive categories, concretely the various operations that permit to obtain valid specifications of combinatorial species from simpler species, which strikingly resemble the operations listed in the work of Ehrig~\cite{ehrig:2006aa} for generating $\cM$-adhesive categories from $\cM$-adhesive categories (e.g.\ the operations of sum, product, functor and comma categories). Yet to date, this clear correspondence is not straightforward to exploit, in part since many combinatorial specifications of graph-like structures require in effect the theory of $\cM$-adhesive categories combined with structural constraints in order to encode them rewriting-theoretically, but more importantly since generating-function techniques are not in the least common-place in rewriting theory. As a modest first step towards a rewriting-based variant of combinatorial theories, we introduce here a notion of multi-variate generating functions that permits to much more clearly compare the combinatorial theory with rewriting theory. In Section~\ref{sec:PRBTs}, we will moreover present a complete worked example in order to illustrate this new viewpoint.

Consider a rewriting system over some suitable category $\bfC$ that consists of a finite set of rules with conditions $R_1,\dotsc,R_n\in \LinAc{\bfC}$. For some choice of parameters $\gamma_1,\dotsc,\gamma_n\in \bR$, define a linear operator
\begin{equation}\label{eq:defG}
\hat{G}:=\sum_{j=1}^n \gamma_j\rho(\delta(R_j))\,.
\end{equation} 
The readers will immediately notice that $\hat{G}$ has a natural interpretation as a linear operator that encodes ``application of the rules $R_1,\dotsc,R_n$ in all possible ways, and weighted by the parameters $\gamma_1,\dotsc,\gamma_n$''. In close analogy to the definition of exponential moment-generating functions (EMGFs) in the previous section, if we in addition choose a (finite) set of pattern observables $\hat{O}_1,\dotsc,\hat{O}_m$, we may define what is conventionally also referred to as an EMGF (albeit this time not for moments of a probability distribution, but of an arbitrary distribution):
\begin{equation}
\cG(\lambda;\vec{\omega}):=\bra{}e^{\vec{\omega}\cdot\vec{\hat{O}}}e^{\lambda\hat{G}}\ket{X_0}
\end{equation}
Here, we chose an ``initial state'' $\ket{X_0}\in \hat{\bfC}$, yet we could have in principle equally well chosen some arbitrary ``initial distribution'' $\ket{\Phi(0)}$ (possibly subject to suitable summability conditions). Equipped with the insights from the previous section, it is straightforward to develop the analogue of the \emph{formal EMGF evolution equation} for $\cG(\lambda;\vec{\omega})$:
\begin{equation}\label{eq:EMGFfeG}
\tfrac{\partial}{\partial \lambda}\cG(\lambda;\vec{\omega})
=\bra{}\left(
e^{ad_{\vec{\omega}\cdot\vec{\hat{O}}}}\hat{G}
\right)e^{\vec{\omega}\cdot\vec{\hat{O}}}e^{\lambda\hat{G}}\ket{X_0}
\end{equation}
Applying the version of the jump-closure theorem appropriate for the chosen rewriting semantics (DPO or SqPO), the above formal evolution equation may be converted into a proper evolution equation on formal power series if  the following modified version of polynomial jump-closure holds:
\begin{equation}\label{eq:defPJC2}
(\mathsf{PJC}')\quad \forall q\in \bZ_{{\color{blue}\geq}0}:\exists \vec{N(n)}\in \bZ_{\geq0}^m,\gamma_q(\vec{\omega},\vec{k})\in \bR:\; 
\bra{}ad_{\vec{\omega}\cdot\vec{\hat{O}}}^{\:\circ\:q}(\hat{G}) 
=\sum_{\vec{k}=\vec{0}}^{\vec{N(q)}}\gamma_{\vec{k}}(\vec{\omega},\vec{k})\bra{}\vec{\hat{O}}^{\vec{k}}
\end{equation}
If a given set of observables satisfies $(\mathsf{PJC}')$, the formal evolution equation~\eqref{eq:PJC} for the EMGF $\cG(\lambda;\vec{\omega})$ may be refined into 
\begin{equation}
\tfrac{\partial}{\partial\lambda}\cG(\lambda;\vec{\omega}) = \bG(\vec{\omega},\vec{\partial\omega})\cG(\lambda;\vec{\omega})\,,\quad 
\bG(\vec{\omega},\vec{\partial \omega})=\left.\left(
\bra{} e^{ad_{\vec{\omega}\cdot\vec{\hat{O}}}}(\hat{G})
\right)\right\vert_{\vec{\hat{O}}\mapsto\vec{\partial \omega}}\,.
\end{equation}

\subsection{Weighted combinatorial species}\label{sec:wcs}

In modern standard approaches to combinatorics such as described in the seminal book~\cite{FlajoletSedgewick}, the central technical tool consists in the theory of \emph{combinatorial species}. According to the work of A.~Joyal~\cite{Joyal1981}, the specification of a combinatorial species (also referred to as an $F$-structure) consists in providing a functor $F:\textbf{B}\rightarrow\textbf{B}$, where $\mathbf{B}:=\mathbf{FinSet}^{*}$ denotes the groupoid of finite sets and bijections\footnote{Recall that a groupoid is a category in which all morphisms are isomorphism; consequently, given an arbitrary category $\mathbf{C}$ (such as e.g.\ $\mathbf{FinSet}$, the category of finite sets and total functions), one may obtain a groupoid $\mathbf{C}^{*}$ called the \emph{core of $\bfC$} via restricting the morphisms of $\bfC$ to isomorphisms.} $\bfB=\mathbf{FinSet}^{*}$. The functoriality of $F$ entails in particular that if $F[A]$ is an $F$-structure for a given set of ``labels'' $A$, and if $f:A\rightarrow B$ is an isomorphism of finite sets (i.e.\ a morphism of $\bfB$), then one obtains a morphism of $F$-structures $F[f]:F[A]\rightarrow F[B]$ that is also an isomorphism (since functors preserve isomorphisms). On the other hand, two finite sets $A$ and $B$ can only be isomorphic if they have the same number of elements, $|A|=|B|$, thus we recover the more intuitive interpretation of $F$-structures as ``structures induced by $F$ and invariant under relabeling''. We posit that in situations where the combinatorial species is of a certain ``graph-like nature'' (to be specified in further details momentarily), it may be advantageous to consider instead of classical species theory an alternative approach that is based upon category-theoretical structure of the combinatorial structures other than the defining species structure. More precisely, in the case where the species at hand is either giving rise to an \emph{(finitary) $\cM$-adhesive category}~\cite{Braatz:2010aa}, or whenever the structure arises as the restriction of such a category via imposing \emph{constraints} in the sense of~\cite{habel2009correctness} (see also~\cite{behrRaSiR} for further details), one may utilize rule-algebraic techniques to analyze these combinatorial structures. While we postpone a detailed analysis of precisely which prerequisites are necessary for a combinatorial species defined in terms of a species functor to also possess the structure of (a restriction of an) $\cM$-adhesive category to future work, in this paper we follow the pragmatic approach of introducing the computational theory of rule-algebra based formal moment evolution equations, and then illustrate this approach with the application example of studying the species of \emph{planar rooted binary trees (PRBTs)}.\\

In order to give an interpretation to $\cG(\lambda;\vec{\omega})$ within the theory of combinatorics, consider first the special case of setting all formal parameters $\omega_1,\dotsc,\omega_m$ to zero:
\begin{equation}\label{eq:genSeriesG}
\cG(\lambda;\vec{0}) = \sum_{n\geq 0} \tfrac{\lambda^n}{n!} \bra{}\hat{G}^n\ket{X_0}\equiv \sum_{n\geq 0}\tfrac{\lambda^n}{n!} g_n
\end{equation}
Since $\hat{G}^n\ket{X_0}$ is nothing but a distribution over all outcomes of applying $n$ rewriting steps with rules from the given rule-set in all possible ways, and with weights given by the parameters $\gamma_1,\dotsc,\gamma_n$, one may recognize $\cG(\lambda;\vec{0})$ as the exponential generating function (EGF) of some \emph{weighted combinatorial species}~\cite{bergeron1997,bergeron2013introduction}. To expose this feature more clearly, let $\{\Rightarrow_{(i)}\}_{i>0}$ denote the reachability relation on $\obj{\bfC}_{\cong}^{\times 2}$ as introduced in~\eqref{eq:defG} (with respect to the initial configuration $X_{0}\in\obj{\bfC}_{\cong}$ and the rule-set $\{R_j\}_{h=1}^n$ used to define $\hat{G}$). Then one may view $\hat{G}$ as the \emph{generator} of a (countable) set of structures $S_{\hat{G}}$,
\begin{equation}
S_{\hat{G}}:=\cup_{n>0}S_{\hat{G}}^{(n)}\,,\quad S_{\hat{G}}^{(n)}:=
\begin{cases}
\{X_0\} \quad & \text{if } n=0\\
\{ X\in \obj{\bfC}_{\cong}\mid X_0\Rightarrow_{(n)} X\}& \text{if } n>0\,.
\end{cases}
\end{equation}
Since we assume $X_0$ to be a finite object (in the sense of a finite number of $\cM$-subobjects), clearly each of the sets $S_{\hat{G}}^{(n)}$ is of finite cardinality. In addition, the coefficients $g_n=\bra{}\hat{G}^n\ket{X_0}$ are evidently of finite value as well, which in summary permits the following \emph{repartition} of the formal power series $\cG(\lambda;\vec{0})$:
\begin{equation}\label{eq:gnXDef}
\cG(\lambda;\vec{0}) = \sum_{n\geq 0} \tfrac{\lambda^n}{n!} \sum_{X\in S^{(n)}_{\hat{G}}} g_n(X)\,,\qquad g_n(X):=\bra{X}\hat{G}^n\ket{X_0}
\end{equation}
Consequently, the configurations $X\in S^{(n)}_{\hat{G}}$ may be seen as the combinatorial structures contained in the $n$-th generation, with $g_n(X)$ the \emph{weight} of a configuration $X$ in the $n$-th generation.

For generic values of $\vec{\omega}$, $\cG(\lambda;\vec{\omega})$ evaluates as follows:
\begin{equation}
\cG(\lambda;\vec{\omega})=\sum_{n\geq0}\tfrac{\lambda^n}{n!} 
\bra{}e^{\vec{\omega}\cdot\vec{\hat{O}}}\hat{G}^n\ket{X_0}
=\sum_{n\geq 0} \tfrac{\lambda^n}{n!} \sum_{X\in S^{(n)}_{\hat{G}}} g_n(X)
e^{\vec{\omega}\cdot\vec{N(X)}}\,,\quad N_i(X):=\bra{}\hat{O}_i\ket{X}\,.
\end{equation}

\subsection{Embedded discrete-time Markov chains}\label{sec:edtmcs}

Given a linear operator $\hat{G}$ for which all of the parameters $\gamma_1,\dotsc,\gamma_n\in \bR$ are positive, and with initial state $\ket{X_0}$ for some $X_0\in \obj{\bfC}_{\cong}$, one may construct from this data a CTMC according to~\eqref{eq:defCTMC}:
\begin{equation}
H=\hat{H}+H_{O}\,,\quad \hat{H}:=\rho(G)=\hat{G}\,,\quad H_O:=-\hat{\bO}(G)\,,\quad
G:=\sum_{j=1}^n \gamma_j\delta(R_j)
\end{equation}
This raises an interesting question: what is the precise interpretation of the formal power series $\cG(\lambda;\vec{\omega})$ in this particular setting? Recall first from~\eqref{eq:genSeriesG} that the coefficients $g_n:=\bra{}\hat{G}^n\ket{X_0}$ are finite real numbers (by assumption of finiteness of $X_0$ and of the rule-set defining $\hat{G}$), and due to the additional assumption $\gamma_1,\dotsc,\gamma_n$ of $\hat{G}$ made here, they are in fact \emph{positive} real numbers. We may conclude in summary that $g_0:=1$ and $g_n\in\bR_{>0}$ for all $n\geq 0$. Evidently, we also find that $g_n(X)\in\bR_{>0}$ for all $n\geq 0$ and $X\in S_{\hat{G}}^{(n)}$, which leads to the following result:
\begin{lemma}\label{lem:eDTMC}
With notations as above, let the \emph{integral transformation}\footnote{This notation is motivated by the operational theory of \emph{umbral calculus} presented in~\cite{bddlp2018}.} $\hat{\bI}$ be implicitly defined via its action $\hat{\bI}(s^n):=g_n^{-1}$ on the integration variable $s$, and let $\hat{d}$ denote the \emph{umbral transformation} of $\hat{G}$, defined via $ \hat{d}^n := \hat{\bI}((s\hat{G})^n)$  (for $n\in\bZ_{\geq0}$). Then the $\tau$-dependent distribution
\begin{equation}
\ket{\Phi(\tau)}:= e^{\tau\hat{d}}\ket{X_0}=\sum_{n\geq 0}\tfrac{\tau^n}{n!}\hat{d}^n\ket{X_0}
\end{equation}
encodes a family of distributions $\ket{\Phi_n}:=\hat{d}^n\ket{X_0}$, where $\ket{\Phi_n}$ is the result of the $n$-th step of the \emph{embedded discrete-time Markov chain (DTMC)} of the $CTMC$ $(H,\ket{X_0})$. In particular, $\ket{\Phi_n}$ is a \emph{probability distribution}, with coefficients $p_n(X):=\bra{X}\hat{d}^n\ket{X_0}$ the probability of reaching state $\ket{X}$ from the initial state $\ket{X_0}$ in $n$ steps.
\end{lemma}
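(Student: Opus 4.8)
The plan is to compute the "umbral" powers $\hat d^n$ explicitly in terms of $\hat G$ and then recognize the resulting operator as precisely the one-step operator of the embedded DTMC associated to $(H,\ket{X_0})$. First I would unwind the definitions: by construction $\hat d^n := \hat{\bI}\bigl((s\hat G)^n\bigr) = \hat{\bI}(s^n)\,\hat G^n = g_n^{-1}\hat G^n$, since $\hat{\bI}$ acts only on the formal integration variable $s$ and $\hat G$ commutes with that formal bookkeeping. Thus $\hat d^n = g_n^{-1}\hat G^n$, and $\ket{\Phi_n} = \hat d^n\ket{X_0} = g_n^{-1}\hat G^n\ket{X_0}$, so that $p_n(X) = \bra{X}\hat d^n\ket{X_0} = g_n^{-1}\,g_n(X)$ with $g_n(X)$ as in~\eqref{eq:gnXDef}. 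Since $\gamma_1,\dots,\gamma_n>0$ and $X_0$ is finite, every $g_n(X)\ge 0$ and $g_n = \sum_{X\in S^{(n)}_{\hat G}} g_n(X) > 0$, so $p_n(X)\ge 0$ and $\bra{}\hat d^n\ket{X_0} = g_n^{-1}\sum_X g_n(X) = 1$; hence $\ket{\Phi_n}$ is a genuine probability distribution. This already gives the "probability distribution" clause and the interpretation of $p_n(X)$ as a reachability weight.

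The substantive part is to identify $\ket{\Phi_n}$ with the $n$-th step of the embedded DTMC of $(H,\ket{X_0})$. I would recall the standard construction: for an infinitesimal generator $H = \hat H + H_{\bO}$ with $\hat H = \rho(G)$ off-diagonal non-negative and $H_{\bO} = -\hat{\bO}(G)$ diagonal, the embedded DTMC has transition operator $P$ whose $(X\!\to\!Y)$ entry (for $Y\not\cong X$) is $\hat H_{YX}/(\text{total outgoing rate at }X)$, i.e.\ $P = \hat H \, D^{-1}$ where $D$ is the diagonal operator with entries $D_{XX} = \bra{}\hat H\ket{X} = \bra{}\hat{\bO}(G)\ket{X}$ (using the jump-closure identity $\bra{}\rho(G) = \bra{}\hat{\bO}(G)$ from~\eqref{eq:JCgen}, which guarantees $\bra{}\hat H = \bra{}D$, so that each column of $P$ sums to one and $P$ is stochastic). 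The claim is then that $\ket{\Phi_n} = P^n\ket{X_0}$. I would prove this by induction on $n$: the $n=0$ case is $\ket{X_0}=\ket{X_0}$; for the step, I must show $g_{n+1}^{-1}\hat G^{n+1}\ket{X_0} = P\,\bigl(g_n^{-1}\hat G^n\ket{X_0}\bigr)$, i.e.\ $\hat G\hat G^n\ket{X_0} = (g_{n+1}/g_n)\,\hat H D^{-1}\hat G^n\ket{X_0}$. Here $\hat H = \hat G$, so this reduces to the normalization statement $g_{n+1}\,\ket{\Phi_n} = D\,\hat G^n\ket{X_0}$ on each component — equivalently $g_{n+1}\,p_n(X) = D_{XX}\,g_n(X)$ — which says that the outgoing rate $D_{XX}$ times the weight of $X$ in generation $n$ equals the total generation-$(n{+}1)$ mass times $p_n(X)$. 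The natural way to see this is to observe $\bra{}\hat G^{n+1}\ket{X_0} = \bra{}\hat G\,\hat G^n\ket{X_0} = \bra{}\hat{\bO}(G)\,\hat G^n\ket{X_0}$ by jump-closure, i.e.\ $g_{n+1} = \sum_X D_{XX}\,g_n(X)$, and then reconcile this aggregate identity with the per-state claim via the flux balance of the DTMC; in fact it is cleaner to argue directly that $P^n\ket{X_0}$ and $g_n^{-1}\hat G^n\ket{X_0}$ satisfy the same recursion $\ket{\Phi_{n+1}} = \hat G\ket{\Phi_n}/\bra{}\hat G\ket{\Phi_n}$, which follows because renormalizing after one application of $\hat H = \hat G$ and renormalizing after one application of $P = \hat H D^{-1}$ produce the same probability vector whenever the starting vector is already a probability vector (using $\bra{}\hat G = \bra{}\hat{\bO}(G)$ to match the normalizers).

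The main obstacle I anticipate is making rigorous the manipulation of $\hat{\bI}$ and the umbral transformation — in particular justifying that $\hat{\bI}\bigl((s\hat G)^n\bigr) = g_n^{-1}\hat G^n$ as an equality of operators on $\hat{\bfC}$, rather than merely as a formal-symbol identity, and checking that this is well defined (it requires $g_n\neq 0$ for all $n$, which is exactly where positivity of the $\gamma_j$ and finiteness of $X_0$ enter — these were established just before the lemma). The rest is bookkeeping: once $\hat d^n = g_n^{-1}\hat G^n$ is in hand, positivity and summing-to-one are immediate from the earlier observations that $g_n,g_n(X)\in\bR_{>0}$, and the identification with the embedded DTMC is a direct comparison of recursions mediated by the jump-closure identity~\eqref{eq:JCgen}. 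I would close by remarking that $\ket{\Phi(\tau)} = e^{\tau\hat d}\ket{X_0} = \sum_{n\ge 0}\tfrac{\tau^n}{n!}\ket{\Phi_n}$ is then a generating function packaging all the embedded-chain steps, exactly as asserted.
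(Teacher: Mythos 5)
Your unwinding of the umbral definition is correct and is surely what the paper intends: since $\hat{\bI}$ acts only on the formal variable $s$, $\hat{d}^n=\hat{\bI}(s^n\hat{G}^n)=g_n^{-1}\hat{G}^n$, so $\ket{\Phi_n}=g_n^{-1}\hat{G}^n\ket{X_0}$ has non-negative coefficients $g_n(X)/g_n$ summing to $1$ by the positivity of the $\gamma_j$ and the finiteness of $X_0$. (The paper offers no proof of this lemma at all, only the remark introducing the ``direct'' operator $\hat{d}=\hat{G}\,(\hat{\bO}(G))^{-1}_{*}$, so there is nothing to compare your argument against; it must stand on its own.)

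It does not stand, however, at the one substantive step: the identification of $\ket{\Phi_n}$ with the $n$-th step of the jump chain. Your claim that applying $P=\hat{H}D^{-1}$ to a probability vector agrees with applying $\hat{G}$ and renormalizing globally is false. The embedded DTMC normalizes \emph{per state}, dividing the column at $X$ by the local exit rate $D_{XX}=\bra{}\hat{\bO}(G)\ket{X}$, whereas $g_{n+1}^{-1}\hat{G}^{n+1}\ket{X_0}$ divides everything by the single constant $g_{n+1}/g_n=\sum_X D_{XX}\,p_n(X)$, an \emph{averaged} exit rate; these agree only when $D_{XX}$ is constant on the support of $\ket{\Phi_n}$. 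Indeed, the per-state identity $g_{n+1}\,p_n(X)=D_{XX}\,g_n(X)$ that you would need, combined with $p_n(X)=g_n(X)/g_n$, forces $D_{XX}=g_{n+1}/g_n$ for every $X\in S^{(n)}_{\hat{G}}$, so the ``flux balance'' reconciliation you defer cannot be carried out in general. The paper's own Example~1 is a counterexample: for the birth--death system with $\kappa_+=\kappa_-=1$ and $X_0=\mIO$ one computes $\hat{G}^3\ket{\mIO}=\ket{X_3}+3\ket{X_1}$ (with $X_k$ the $k$-vertex state) and $g_3=4$, so the umbral $\ket{\Phi_3}$ assigns probability $\tfrac14$ to $X_3$, while the jump chain of $(H,\ket{\mIO})$ reaches $X_3$ in three steps with probability $\tfrac12\cdot\tfrac13=\tfrac16$. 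The lemma therefore holds only under an additional uniformity hypothesis --- constant exit rate $\bra{}\hat{\bO}(G)\ket{X}$ on each generation $S^{(n)}_{\hat{G}}$ --- which is satisfied by the R\'emy generator (where $\bra{}\hat{G}\ket{t}=2(2n+1)$ for all $t\in\cT_n$) but is not among the stated assumptions. A correct write-up must either add and use that hypothesis, under which your induction does close, or else prove the statement for the direct definition $\hat{d}=\hat{G}\,(\hat{\bO}(G))^{-1}_{*}$ and note that the umbral formula reproduces it only in the uniform case.
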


The readers uncomfortable with the idea of employing some integral transformation in order to define the DTMC generator $\hat{d}$ might alternatively prefer a more direct definition of $\hat{d}$ in the following form:
\begin{equation}
\hat{d}:= \hat{G}\;\cdot\; \left(\hat{\bO}(G)\right)^{-1}_{*}\,,\quad
\left(\hat{\bO}(G)\right)^{-1}_{*}\ket{X}:=\begin{cases}
1\quad &\text{if } \hat{G}\ket{X}=0_{\bR}\\
\tfrac{1}{\bra{}(\hat{\bO}(G)\ket{X}}\ket{X} &\text{otherwise.}
\end{cases}
\end{equation}
Note that the order of $\hat{G}$ and $\left(\hat{\bO}(G)\right)^{-1}_{*}$ in the above formula is not exchangeable, and that we had to define the formal inverse $\left(\hat{\bO}(G)\right)^{-1}_{*}$ of $\hat{\bO}(G)$ such that it evaluates to $1$ (rather than the division by $0$) for $\ket{X}$ with $\hat{G}\ket{X}=0_{\bR}$. However, from a purely computational viewpoint, the formulation of Lemma~\ref{lem:eDTMC} is in practice more suitable in order to derive data about embedded DTMCs via evolution equations. 

In full analogy to the arguments for the rewriting-based CTMCs, one may introduce exponential moment-generating functions for the embedded DTMCs:
\begin{equation}
\cD(\tau;\vec{\omega}):=\bra{}e^{\vec{\omega}\cdot\vec{O}}e^{\tau\hat{d}}\ket{X_0}
\end{equation}
One of the most interesting use-cases of this construction is the situation where the set of observables satisfies the modified polynomial jump-closure condition $(\mathsf{PJC}')$ of~\eqref{eq:defPJC2} with respect to $\hat{G}$ (and thus by extension also for $\hat{d}$). Assuming once again the existence and finiteness of all moments, the variable transformations $\omega_i\to \ln x_i$ then induce the following evolution equation:
\begin{equation}\label{eq:eDTMCpgfEvo}
\tfrac{\partial}{\partial\tau}P(\tau;\vec{x})= \hat{\mathsf{d}}(\vec{x},\vec{\partial x})P(\tau;\vec{x})\,,\quad
P(\tau;\vec{x}):= \bra{}\vec{x}^{\vec{\hat{O}}}\, e^{\tau\hat{d}}\ket{X_0}\,,\quad
\hat{\mathsf{d}}(\vec{x},\vec{\partial x}):= \left.\left(
e^{ad_{\vec{\ln x}\cdot\vec{\hat{O}}}}(\hat{d})
\right)\right\vert_{\vec{\hat{O}}\to\vec{\tfrac{\partial}{\partial x}}}
\end{equation}
As we will demonstrate in the next section, for suitable choices of $\hat{G}$ and observables, the above type of evolution equation permits to \emph{statically analyze} an induced DTMC that evolves not on the original state space $\hat{\bfC}$, but instead on a state space indexed by the vectors $\vec{N(X)}$ of pattern counts (with $N_i(X):=\bra{}\hat{O}_i\ket{X}$). The interest in such types of \emph{observable-based marginalization} of the probability distribution of the embedded DTMC is that typically the evolution over the full state space $\hat{\bfC}$ would be entirely infeasible to interpret (or even to compute), reiterating that for instance in the case of the tree-based example presented in the next section, the reachable state space even after just $100$ applications of $\hat{G}$ contains already more than $10^{217}$ states.

\section{A prototypical example: planar rooted binary trees}\label{sec:PRBTs}

Trees in all their sorts and varieties are amongst some of the best-studied combinatorial structures, yet have not been considered in any detail from the viewpoint of graph rewriting theory. For the present illustration, let us consider \emph{planar rooted binary trees (PRBTs)} and disjoint unions thereof, which will be referred to as planar rooted binary forests (PRBFs). We will encode PRBTs as \emph{typed directed graphs} that satisfy certain structural constraints. Concretely, let $\mathbf{prePRBF}$ (the ``host category'' for planar rooted binary forests) be the adhesive category of directed multigraphs typed over the type-graph $T_{PRBF}$,
\begin{equation}
\mathbf{prePRBF}:=\mathbf{FinGraph}/T_{PRBF}\,,\quad T_{PRBF}:= \inputTikz{TPRBF}
\end{equation}
In close analogy to the fashion in which the data type of Kappa site-graphs~\cite{Boutillier:2018aa} may be encoded as recently described in~\cite{BK2020}, PRBFs may be defined as objects of $\mathbf{prePRBF}$ that satisfy the \emph{structural constraint} $\ac{c}_{PRBF}$ that is defined in terms of negative and positive constraints over the initial object $\mIO$ (i.e.\ the ``empty object'') as follows:
\gdef\sepS{0.6}
\begin{equation}\label{eq:defACpbrf}
\begin{aligned}
\ac{c}_{PRBF}&:= \ac{c}_{PRBF}^{(-)}\land \ac{c}_{PRBF}^{(+)}\,,\qquad
\ac{c}_{PRBF}^{(-)}:= \bigwedge_{N\in \cN_{PRBF}} \not \exists (\mIO\hookrightarrow N)\\ 
\cN_{PRBF}&:=\left\{\inputTikz{NPRBFa}\,,\,\inputTikz{NPRBFb}\,,\,\inputTikz{NPRBFc}\,,\,\inputTikz{NPRBFd}\,,\,\inputTikz{NPRBFe}\right\}\cup \bigcup_{T,T'\in\{I,L,R\}}\left\{\inputTikz{NPRBFf}\,,\,\inputTikz{NPRBFg}\,,\,\inputTikz{NPRBFh}\right\}\\
\ac{c}_{PRBF}^{(+)}&:=\forall\left(\mIO\hookrightarrow 
\inputTikz{PPRBFa}, \exists\left(\inputTikz{PPRBFb}\hookrightarrow\inputTikz{PPRBFc}\right)\right)\bigwedge
\forall\left(\mIO\hookrightarrow 
\inputTikz{PPRBFd}, \exists\left(\inputTikz{PPRBFe}\hookrightarrow\inputTikz{PPRBFf}
\right)\right)\\
&\qquad\qquad\bigwedge
\bigwedge_{T\in\{L,R\}}
\forall\left(
\mIO\hookrightarrow \inputTikz{PPRBFg}, 
\bigvee_{T'\in\{I,L,R\}}
\exists\left(\inputTikz{PPRBFh}\hookrightarrow\inputTikz{PPRBFi}
\right)\right)
\end{aligned}
\end{equation}
Following yet again the tradition of the Kappa framework~\cite{Boutillier:2018aa} (see also~\cite{Danos2014}), let us introduce the set $\mathbf{P}_{PRBF}$ of \emph{PRBF patterns} and the set $\mathbf{S}_{PRBF}$ of \emph{states} (with the latter coinciding of course with the set of PRBFs), with the natural hierarchy $\mathbf{S}_{PRBF}\subset \mathbf{P}_{PRBF}\subset\obj{\mathbf{prePRBF}}_{\cong}$: 
\begin{equation}
\mathbf{P}_{PRBF}:=\{ X\in \obj{\mathbf{prePRBF}}_{\cong}\mid X\vDash\ac{c}_{PRBF}^{(-)}\}\,,\quad
\mathbf{S}_{PRBF}:=\{ X\in \mathbf{P}_{PRBF}\mid X\vDash\ac{c}_{PRBF}^{(+)}\}
\end{equation}
The importance of this distinction between patterns and states is that in general one may define \emph{rules} over \emph{patterns}, while \emph{states} will be the types of structures over which we will study CTMC, DTMC or combinatorial constructions. It is well-known~\cite{Danos2014} that by virtue of the properties of negative application conditions, for every $\cM$-morphism $P'\hookrightarrow P$ where $P$ is a pattern, $P'$ is a pattern as well.

We may finally define \emph{planar rooted binary trees} as elements of $\mathbf{S}_{PRBF}$ that are in addition \emph{connected graphs}. If we let $\cT_n$ denote the set of PRBTs with $(n+1)$ leaves, we thus find for example
\begin{equation}
\cT_0 :=\left\{\inputTikz{PRBTexA}\right\}\,,\;
\cT_1 :=\left\{\inputTikz{PRBTexB}\right\}\,,\;
\cT_2 :=\left\{\inputTikz{PRBTexC}\,,\,\inputTikz{PRBTexD}\right\}\,,\; \dotsc
\end{equation}
From hereon, we will simplify our graphical notations via omitting the vertices when drawing PRBTs, the direction of edges and also the $I$, $L$ and $R$ labels where possible, since the type of the edges may be inferred from the chosen ``standard orientation'' for the PRBT depictions:
\begin{equation}
\inputTikz{PRBTnotationsA}\equiv \inputTikz{PRBTnotationsB}\,,\quad \inputTikz{PRBTnotationsC}\equiv \inputTikz{PRBTnotationsD}\,\quad\inputTikz{PRBTnotationsE}\equiv \inputTikz{PRBTnotationsF}
\end{equation}

For illustration of the computational framework put forward in the present paper, we will construct and analyze PRBTs via the so-called \emph{R\'{e}my uniform generator}~\cite{remy1985procede}, starting from the root-only PRBT $|\in\cT_0$. The generator may be encoded in the present formalism as follows (where we let\footnote{Note that our choice of SqPO-type rewriting was taken purely for technical convenience, i.e.\ in order to take advantage of the slightly simpler structure of SqPO-type observables (cf.\ \eqref{eq:defSqPOjc}) and SqPO-type jump-closure (cf.\ \eqref{eq:JCgen}).} $\rho:=\rho_{\mathbf{prePRBF}}^{{SqPO}}$):
\begin{equation}\label{eq:defGR}
\begin{aligned}
\hat{G}:= \hat{G}_L+\Hat{G}_R\,,\;
\hat{G}_L&:= \inputTD{1}{\tdScale}:=\sum_{T\in\{I,L,R\}}\rho\left(\delta\left(
\inputTikz{defGRa}\hookleftarrow \inputTikz{defGRb}\hookrightarrow \inputTikz{defGRc}\,,\; \Shift\left(
\mIO\hookrightarrow \inputTikz{defGRd}, \ac{c}_{PRBF}\right)
\right)\right)\\
\hat{G}_R&:= \inputTD{2}{\tdScale}:=\sum_{T\in\{I,L,R\}}\rho\left(\delta\left(
\inputTikz{defGRe}\hookleftarrow \inputTikz{defGRf}
\hookrightarrow \inputTikz{defGRg}\,,\; \Shift\left(\mIO\hookrightarrow \inputTikz{defGRh}, \ac{c}_{PRBF}\right)
\right)\right)
\end{aligned}
\end{equation}
Here, in the specification of the rewriting rules, we have highlighted the vertices that are preserved by the rules (as black vertices in the compressed notation, and in {\color{blue}blue}  in the explicit notation for better readability). The application conditions for the rules are simply suitable shifts of the structural constraints $\ac{c}_{PRBT}$ to the input interfaces of the rules; this is because while one would a priori also need to consider a contribution to the application conditions that ensures satisfaction of $\ac{c}_{PRBT}$ after application of the rules (i.e.\ technically applying $\Trans$ to the $\Shift$ of $\ac{c}_{PRBT}$ from $\mIO$ to the output interfaces), one may compute that the resulting conditions are subsumed by the ones explicitly mentioned in~\eqref{eq:defGR}.

As a first consistency check, we verify that $\hat{G}$ is a \emph{uniform generator}, in the sense that
\begin{equation}
\hat{G}\ket{\,|\,} = \sum_{t\in \cT_1} 2! \ket{t}\,,\;
\forall t\in \cT_1: \hat{G}\ket{t} = \sum_{t'\in \cT_2} 3! \ket{t'}\,,\;\dotsc\,,\;
\forall t\in \cT_n: \hat{G}\ket{t}=\sum_{t'\in \cT_{n+1}}(n+2)!\ket{t'}\,.
\end{equation}
In other words, starting from an arbitrary tree $t\in\cT_n$ in ``generation'' $n$, applying $\hat{G}$ yields a uniform distribution over all trees in ``generation'' $n+1$, all with weight $(n+2)!$. 

The core computational strategy put forward in our rule-algebraic framework consists in searching for polynomially jump-closed observable sets and in computing the \emph{commutators} that occur in the various forms of evolution equations. We will focus here on the simplest form of SqPO-type pattern counting observables, namely those of the form $\hat{O}_P:=\hat{O}_{P;\ac{true}}$ (cf.\ \eqref{eq:defSqPOjc}). Since we will be exclusively interested in evaluating the action of $\hat{G}$ and of the observables on PRBTs states, our computations may be simplified to \emph{constraint-preserving} semantics in the sense of~\cite{habel2009correctness}. Under this simplified semantics, the application conditions in the rules of $\hat{G}$ are equivalent to $\ac{true}$, whence in computing SqPO-type rule compositions for the commutators, the problem simplifies drastically to the following one: a partial overlap between the input or output of a rule in $\hat{G}$ with an output or input of another rule $\hat{R}$ with application condition $\ac{true}$ is an admissible match if and only if it is an admissible match of the ``plain rules'', and if in addition the gluing ${\color{h1color}N_{21}}$ of the interfaces as in~\eqref{eq:defRcomp} satisfies the pattern constraints (i.e.\ if ${\color{h1color}N_{21}}\vDash\ac{c}_{PRBF}^{(-)}$).

Let us begin with the simplest non-trivial polynomial jump-closed set of observables for $\hat{G}$, which consists just of the observable $\hat{O}_E$ that ``counts'' edges in the trees regardless of their type:
\begin{equation}
\hat{O}_E := \inputTD{24}{\tdScale} :=\sum_{T\in\{I,L,R\}}\rho\left(\delta\left(
\inputTikz{defOEa}\hookleftarrow \inputTikz{defOEb}\hookrightarrow\inputTikz{defOEc}\,,\;\ac{true}
\right)\right)
\end{equation}
According to SqPO-type jump-closure and under constraint-preserving semantics (i.e.\ when acting on PRBTs), we may verify that the set $\{\hat{O}_E\}$ is indeed polynomially jump-closed with respect to $\hat{G}$:
\begin{equation}\label{eq:GoE}
(i)\quad [\hat{O}_E,\hat{G}] = 2\hat{G}\,,\qquad 
(ii)\quad \bra{}\hat{G} = 2\bra{}\hat{O}_E\,.
\end{equation}
In order to gain some intuitions for the computation technique for commutators, we present below some details on $(i)$, where $\dotsc$ denote contributions that drop out of the commutator due to sequential independence, and where we have highlighted the rules of $\hat{G}$ in {\color{h2color}orange} to show the structure of the individual rule compositions:
\begin{equation}
[\hat{O}_E,\hat{G}]= \left[
\inputTikz{commOEGa}+ \inputTikz{commOEGb} + \inputTikz{commOEGc}\,,\, \inputTD{3}{\tdScale}+\inputTD{4}{\tdScale}
\right]=\inputTD{5}{\tdScale}+\inputTD{6}{\tdScale}+\inputTD{7}{\tdScale}+\inputTD{8}{\tdScale}+\dotsc-\dotsc=2\hat{G}\,.
\end{equation}
This result is sufficient to perform our first moment-EGF computation:
\begin{equation}
\begin{aligned}
\cG(\lambda;\varepsilon)&:=\bra{}e^{\varepsilon\hat{O}_E}e^{\lambda\hat{G}}\ket{\,|\,}\\
\tfrac{\partial}{\partial \lambda}\cG(\lambda;\varepsilon)
&=\bra{}\left(e^{ad_{\varepsilon\hat{O}_E}}(\hat{G}\right)e^{\varepsilon\hat{O}_E}e^{\lambda\hat{G}}\ket{\,|\,}
=\sum_{q\geq0}\tfrac{1}{q!}\bra{}\left(ad_{\varepsilon \hat{O}_E}^{\circ\:q}(\hat{G})\right)e^{\varepsilon\hat{O}_E}e^{\lambda\hat{G}}\ket{\,|\,} &(via~\eqref{eq:EMGFfeG})\\
&=\big(\sum_{q\geq 0}\tfrac{(2\varepsilon)^q}{q!}\big)\bra{}\hat{G}e^{\varepsilon\hat{O}_E}e^{\lambda\hat{G}}\ket{\,|\,}
=2e^{2\varepsilon}\bra{}\hat{O}_Ee^{\varepsilon\hat{O}_E}e^{\lambda\hat{G}}\ket{\,|\,}&(via~\eqref{eq:GoE})
\end{aligned}
\end{equation}
We have thus derived an evolution equation that is solvable e.g.\ via \emph{semi-linear normal-ordering}~\cite{bdg2019}:
\begin{equation}
\left\{
\begin{aligned}
\tfrac{\partial}{\partial \lambda}\cG(\lambda;\varepsilon)&=
2e^{2\varepsilon}\tfrac{\partial}{\partial\varepsilon} \cG(\lambda;\varepsilon)\\
\cG(0;\varepsilon) &= \bra{}e^{\varepsilon\hat{O}_E}\ket{\,|\,}=e^{\varepsilon}
\end{aligned}
\right.\quad\Rightarrow \quad
\cG(\lambda;\varepsilon)=\tfrac{1}{\sqrt{e^{-2\varepsilon}-4\lambda}}
=\sum_{n\geq 0}\tfrac{\lambda^n}{n!}\left(\tfrac{(2n)!}{n!}e^{\varepsilon(2n+1)}\right)\,.
\end{equation}
Unsurprisingly, the final result for $\cG(\lambda;\varepsilon)$ expresses that all PRBTs in ``generation'' $n$ have the same overall number of edges (i.e.\ $2n+1$), and we invite the readers to verify via explicitly computing $g_n:=\bra{}\hat{G}^n\ket{\,|\,}$ for small values of $n$ that $g_n=(2n)!/n!$, which is obtained alternatively via specializing $\cG(\lambda;\varepsilon)$ to $\varepsilon=0$  (with $\cG(\lambda):=\cG(\lambda;0)=\sum_{n\geq 0}\tfrac{\lambda}{n!}g_n$).

The true test of utility of the rule-algebraic methods is of course whether or not it is possible to compute evolution equations for more intricate observables, since in the case of $\hat{O}_E$ it would have been possible to derive the evolution equations and the moment EGF via heuristics. To this end, it will prove useful to introduce some auxiliary results to facilitate dealing with nested commutator equations.

\begin{lemma}\label{lem:adj}
For arbitrary $\hat{R}=\rho(\delta(R))$ and $\{\hat{O}_1,\dotsc,\hat{O}_m\}$ a set of observables, nested actions of the observables on $\hat{R}$ are \emph{multi-linear} and \emph{symmetric} in the following sense (for any permutation $\sigma\in S_m$):
\begin{equation}
ad_{\omega_1\hat{O}_1}\left(ad_{\omega_2\hat{O}_2}\left(\dotsc\left(ad_{\omega_m\hat{O}_m}(\hat{R})\right)\dotsc\right)\right) =
(\omega_1\cdots \omega_m)
ad_{\hat{O}_{\sigma(1)}}\left(ad_{\hat{O}_{\sigma(1)}}\left(\dotsc\left(ad_{\hat{O}_{\sigma(m)}}(\hat{R})\right)\dotsc\right)\right)\,.
\end{equation}
Thus in particular 
\begin{equation}
(i)\quad e^{ad_{\omega_1\hat{O}_1}}[\hat{O}_2,\hat{R}] = [\hat{O}_2,e^{ad_{\omega_1\hat{O}_1}}(\hat{R})]\,,
\quad (ii)\quad 
e^{ad_{\omega_1\hat{O}_1}}\left(e^{ad_{\omega_1\hat{O}_2}}(\hat{R})\right)
=e^{ad_{\omega_1\hat{O}_2}}\left(e^{ad_{\omega_1\hat{O}_1}}(\hat{R})\right)
\,.
\end{equation}
\end{lemma}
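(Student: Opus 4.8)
The statement is entirely an algebraic identity about nested adjoint actions of the linear operators $\hat{O}_i = \rho(\delta(O_i))$ on a fixed operator $\hat{R}=\rho(\delta(R))$, so the proof is a short structural argument with no reference to the underlying categorical machinery beyond what is already packaged in Theorem~\ref{thm:CanRepProps}. The plan is to prove two things separately: (a) the scalar-extraction, i.e.\ $ad_{\omega_1\hat{O}_1}\circ\cdots\circ ad_{\omega_m\hat{O}_m}(\hat{R}) = (\omega_1\cdots\omega_m)\, ad_{\hat{O}_1}\circ\cdots\circ ad_{\hat{O}_m}(\hat{R})$; and (b) the symmetry under permutations $\sigma\in S_m$. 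Item~(a) is immediate: by bilinearity of the commutator, $ad_{\omega A}(B) = [\omega A, B] = \omega[A,B] = \omega\, ad_A(B)$, so each $\omega_j$ factors out one layer at a time by an easy induction on $m$. The content is therefore entirely in~(b).

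For the symmetry~(b), the clean route is to show that the two \emph{outermost} adjoint actions commute, i.e.\ $ad_{\hat{O}_i}\circ ad_{\hat{O}_j} = ad_{\hat{O}_j}\circ ad_{\hat{O}_i}$ as operators on $End(\hat{\bfC})$, and then invoke the fact that adjacent transpositions generate $S_m$. Concretely, for any $C\in End(\hat{\bfC})$ the Jacobi identity for the commutator gives
\begin{equation}
[\hat{O}_i,[\hat{O}_j,C]] - [\hat{O}_j,[\hat{O}_i,C]] = [[\hat{O}_i,\hat{O}_j],C]\,,
\end{equation}
so it suffices to show $[\hat{O}_i,\hat{O}_j]=0$ for all $i,j$. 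This is where Theorem~\ref{thm:CanRepProps}(ii) and the commutativity of the rule-algebra product on these particular basis elements enter: each $\hat{O}_i$ is $\rho$ of a rule whose span is an \emph{identity} span (for SqPO-type observables $O_i\xleftarrow{id}O_i\xrightarrow{id}O_i$, cf.~\eqref{eq:defSqPOjc}), and one checks that $\rho(\delta(O_i))\,\rho(\delta(O_j)) = \rho(\rap{\bT}{\delta(O_i)}{\delta(O_j)})$ is symmetric in $i$ and $j$ — the $\bT$-admissible matches of one identity-span rule into another are in bijection, by swapping the roles of the two patterns, with the matches in the opposite order, and the resulting compositions agree. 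Hence $[\hat{O}_i,\hat{O}_j]=0$, the outermost actions commute, and by transporting this through the nesting (using that $ad_{\hat{O}_k}$ is linear and that commuting operators stay commuting under further $ad_{\hat{O}_\ell}$) one obtains invariance under every adjacent transposition, hence under all of $S_m$.

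For the two displayed consequences: (i) is the special case where we single out $\hat{O}_1$ as the one being exponentiated — expand $e^{ad_{\omega_1\hat{O}_1}} = \sum_{q\geq0}\tfrac{1}{q!}ad_{\omega_1\hat{O}_1}^{\circ q}$ and observe, term by term, that each $ad_{\omega_1\hat{O}_1}^{\circ q}$ commutes past $[\hat{O}_2,-]$ by the multilinearity-and-symmetry just proved; (ii) follows the same way, expanding both exponentials into their double power series and applying the symmetry to each bihomogeneous term $ad_{\omega_1\hat{O}_1}^{\circ p}\circ ad_{\omega_1\hat{O}_2}^{\circ q}(\hat{R})$, which equals $ad_{\omega_1\hat{O}_2}^{\circ q}\circ ad_{\omega_1\hat{O}_1}^{\circ p}(\hat{R})$. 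The main obstacle — and the only step that is not bookkeeping — is the verification that $[\hat{O}_i,\hat{O}_j]=0$, i.e.\ that pattern-counting observables mutually commute; this rests on the concrete combinatorial symmetry of admissible matches between identity-span rules, which must be spelled out (or cited from the rule-algebra literature) rather than merely asserted. Everything else is a routine induction on the number of layers together with the Jacobi identity.
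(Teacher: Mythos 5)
Your proof is correct, and it is the argument the paper leaves implicit (the lemma is stated without proof). The one remark worth making is that your only non-trivial step --- establishing $[\hat{O}_i,\hat{O}_j]=0$ via a bijection between admissible matches of identity-span rules composed in either order --- can be short-circuited: by Definition~\ref{def:RAcanReps} an identity-span rule satisfies $R_m(X)\cong X$ for every admissible match (the FPC, resp.\ pushout complement, and the subsequent pushout both reconstruct $X$), so each observable acts \emph{diagonally}, $\hat{O}_{P;\ac{c}_P}\ket{X}=\vert\MatchGT{\bT}{R}{X}\vert\cdot\ket{X}$ as in~\eqref{eq:jcExplicit}, and diagonal operators on $\hat{\bfC}$ trivially commute. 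From there your route is exactly right: $ad_{[\hat{O}_i,\hat{O}_j]}=[ad_{\hat{O}_i},ad_{\hat{O}_j}]$ by Jacobi, so the operators $ad_{\hat{O}_i}$ on $End(\hat{\bfC})$ pairwise commute and any composite of them is order-independent (adjacent transpositions generate $S_m$); scalars factor out by bilinearity of the commutator; and $(i)$, $(ii)$ follow termwise from the exponential series. Note also the statement's right-hand side has a typo ($\sigma(1)$ repeated where $\sigma(2)$ is meant), which your reading silently and correctly repairs.
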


As a case study, we will now present a body of results on a set of observables that is non-trivially polynomially jump-closed with respect to $\hat{G}$. We will represent by convention an observable $\hat{O}_P$ simply by the pattern $P$ (motivated by the fact that the rule underlying $\hat{O}_P$ is an identity rule):
\begin{equation}
\hat{O}_{P1}:=
\inputTD{9}{\tdScale}
\equiv\sum_{T\in \{I,L,R\}}\!\!\inputTD{25}{\tdScale}\,,\;
\hat{O}_{P2}:= \inputTD{26}{\tdScale}
\equiv\sum_{T\in \{I,L,R\}}\!\!\inputTD{27}{\tdScale}\,,\;
\hat{O}_{P3}:= \inputTD{28}{\tdScale}
\equiv\sum_{T\in \{I,L,R\}}\!\!\inputTD{29}{\tdScale}
\end{equation}
We will need a considerable number of commutator equations, which could in principle ultimately be performed automatically via our tool\footnote{\url{https://gitlab.com/nicolasbehr/ReSMT} (GitHub), \url{https://nicolasbehr.gitlab.io/ReSMT} (documentation)} \texttt{ReSMT}~\cite{behr2020commutators}, but which were computed manually here. For the sake of illustration, we present the computation of $[\hat{O}_{P1},\hat{G}]$ in some detail below (where $\dotsc$ denote contributions that cancel from the commutator due to sequential independence):
\begin{equation}
\begin{aligned}
[\hat{O}_{P1},\hat{G}]&=\left[\inputTD{25}{\tdScale},
\inputTD{3}{\tdScale}+\inputTD{4}{\tdScale}
\right]\\
&= 
\inputTD{11}{\tdScale} +
\inputTD{10}{\tdScale}
+\inputTD{12}{\tdScale}
+\inputTD{13}{\tdScale}
+\inputTD{18}{\tdScale}
+\inputTD{17}{\tdScale}
+\inputTD{19}{\tdScale}
+\inputTD{20}{\tdScale}+\dotsc\\
&\quad- 
\inputTD{15}{\tdScale}-\inputTD{15}{\tdScale}-\inputTD{16}{\tdScale}
-\inputTD{21}{\tdScale}-\inputTD{22}{\tdScale}-\inputTD{23}{\tdScale}- \dotsc =\hat{G}
\end{aligned}
\end{equation}
While $\hat{O}_{P1}$ has thus a similarly simple closure property under commutators with $\hat{G}$ as was the case with $\hat{O}_E$, the observables $\hat{O}_{P2}$ and $\hat{O}_{P3}$ in contrast present an interesting form of higher-order commutator-closure. Due to the complexity of the computations, we present here only the final results:
\begin{equation}
\begin{aligned}\gdef\tdScale{0.4}
[\hat{O}_{P2},\hat{G}]&=\inputTD{30}{\tdScale}+\inputTD{31}{\tdScale}+\inputTD{32}{\tdScale}-\inputTD{33}{\tdScale} -\inputTD{34}{\tdScale}
\qquad\qquad\qquad   {\color{h2color}\hat{R}_{P3'}}:=\inputTD{42}{\tdScale}\\
[\hat{O}_{P3},\hat{G}]&= \inputTD{35}{\tdScale}+\inputTD{36}{\tdScale}
+\inputTD{37}{\tdScale}+\inputTD{38}{\tdScale}
-\inputTD{39}{\tdScale}-\inputTD{40}{\tdScale}
-\inputTD{41}{\tdScale}-{\color{h2color}\hat{R}_{P3'}}\\
[\hat{O}_{P2},[\hat{O}_{P2},\hat{G}]]
&=[\hat{O}_{P2},\hat{G}]\,,\quad [\hat{O}_{P2},[\hat{O}_{P3},\hat{G}]]
=[\hat{O}_{P3},\hat{G}] + \hat{R}_{P3}\\
[\hat{O}_{P3},[\hat{O}_{P3},\hat{G}]]
&=[\hat{O}_{P3},\hat{G}] + 2\hat{R}_{P3'}\,,\quad
[\hat{O}_{P2},\hat{R}_{P3'}]= 0\,,\quad
[\hat{O}_{P3},\hat{R}_{P3'}]= -\hat{R}_{P3'}\\
\bra{}[\hat{O}_{P2},\hat{G}]&=\bra{}(3\hat{O}_{P1}-2\hat{O}_{P2})\,,\quad
\bra{}[\hat{O}_{P3},\hat{G}] = \bra{}(4\hat{O}_{P2}-3\hat{O}_{P3})\,,\quad
\bra{}\hat{R}_{P3'}=\bra{}\hat{O}_{P3}
\end{aligned}
\end{equation}
Summarizing all commutator results thus far, we may conclude that the observables $\{\hat{O}_E,\hat{O}_{P1},\hat{O}_{P2},\hat{O}_{P3}\}$ are polynomially jump-closed with respect to $\hat{G}$, with the closure involving up to triple commutators. We may then formulate the following moment-EGF evolution equation (with $\vec{\omega}:=(\varepsilon,\gamma,\mu,\nu)$):
\begin{subequations}
\begin{align}
\cG(\lambda;\vec{\omega})
&:=\bra{}
e^{\vec{\omega}\cdot\vec{\hat{O}}}
e^{\lambda\hat{G}}\ket{\,|\,}\,,
\quad \vec{\omega}\cdot \vec{\hat{O}}
:= \varepsilon\hat{O}_E+\gamma\hat{O}_{P1}+\mu\hat{O}_{P2}+\nu\hat{O}_{P3}\\
\tfrac{\partial}{\partial \lambda}\cG(\lambda;\vec{\omega})
&=\bra{}\left(e^{ad_{\vec{\omega}\cdot\vec{\hat{O}}}}(\hat{G})\right)
e^{\vec{\omega}\cdot\vec{\hat{O}}}e^{\lambda\hat{G}}\ket{\,|\,}
\overset{(*)}{=}
\bra{}
\left(e^{ad_{\nu\hat{O}_{P3}}}
\left(e^{ad_{\mu\hat{O}_{P2}}}
\left({\color{h1color}e^{ad_{\varepsilon\hat{O}_{E}+\gamma\hat{O}_{P1}}}(\hat{G})}\right)
\right)\right)
e^{\vec{\omega}\cdot\vec{\hat{O}}}e^{\lambda\hat{G}}\ket{\,|\,}\\
&={\color{h1color}e^{2\varepsilon+\gamma}}\bra{}
\left(e^{ad_{\nu\hat{O}_{P3}}}
\left({\color{h2color}e^{ad_{\mu\hat{O}_{P2}}}(\hat{G})}\right)
\right)
e^{\vec{\omega}\cdot\vec{\hat{O}}}e^{\lambda\hat{G}}\ket{\,|\,}\\
&={\color{h1color}e^{2\varepsilon+\gamma}}\bra{}
\left({\color{h4color}e^{ad_{\nu\hat{O}_{P3}}}
\left({\color{h2color}\hat{G}+(e^{\mu}-1)[\hat{O}_{P2},\hat{G}]}\right)}
\right)
e^{\vec{\omega}\cdot\vec{\hat{O}}}e^{\lambda\hat{G}}\ket{\,|\,}\\
&\begin{array}{l}
\!\!=
{\color{h1color}e^{2\varepsilon+\gamma}}\bra{}
\big({\color{h4color}
\hat{G}+(e^{\mu}-1)[\hat{O}_{P2},\hat{G}]}\\
\qquad\qquad{\color{h4color}+e^{\mu}(e^{\nu}-1)[\hat{O}_{P3},\hat{G}]+(e^{\nu}-1)(e^{\mu}-e^{-\nu})\hat{R}_{P3'}}
\big)
e^{\vec{\omega}\cdot\vec{\hat{O}}}e^{\lambda\hat{G}}\ket{\,|\,}
\end{array}\\
&\begin{array}{l}
\!\!=
e^{2\varepsilon+\gamma}\bra{}\big(
2\hat{O}_E+3(e^{\mu}-1)\hat{O}_{P1}+(4 e^{\mu+\nu}-6e^{\mu}+2)\hat{O}_{P2}\\
\qquad\qquad
+(3e^{\mu}+e^{-\nu}-3e^{\mu+\nu}-1)\hat{O}_{P3}
\big)e^{\vec{\omega}\cdot\vec{\hat{O}}}e^{\lambda\hat{G}}\ket{\,|\,}
\end{array}\\
&\begin{array}{l}
\!\!=
e^{2\varepsilon+\gamma}\bra{}\big(
2\tfrac{\partial}{\partial \varepsilon}
+3(e^{\mu}-1)\tfrac{\partial}{\partial \gamma}
+(4 e^{\mu+\nu}-6e^{\mu}+2)\tfrac{\partial}{\partial \mu}\\
\qquad\qquad
+(3e^{\mu}+e^{-\nu}-3e^{\mu+\nu}-1)\tfrac{\partial}{\partial \nu}
\big)e^{\vec{\omega}\cdot\vec{\hat{O}}}e^{\lambda\hat{G}}\ket{\,|\,}
\end{array}
\end{align}
\end{subequations}
Here, in the step marked $(*)$, we took advantage of the commutativity of the adjoint action of observables according to Lemma~\ref{lem:adj}, and each of the subsequent lines amounts to evaluating the highlighted adjoint actions utilizing the formulas for the commutators, with the last step resulting from applying SqPO-type jump-closure in the sense of~\eqref{lem:adj}.

Granted that the derivation of the evolution equation for $\cG(\lambda;\vec{\omega})$ is somewhat involved, one may extract from it a very interesting insight via a transformation of variables $\omega_i\to \ln x_i$ (which entails that $\frac{\partial}{\partial \omega_i}\to x_i \frac{\partial}{\partial x_i}$), and collecting coefficients for the operators $\hat{n}_i:= x_i \frac{\partial}{\partial x_i}$:
\begin{equation}
\begin{aligned}
&\tfrac{\partial}{\partial \lambda}\cG(\lambda;\vec{\ln x})
=\hat{\mathsf{D}}\cG(\lambda;\vec{\ln x})\\
&\hat{\mathsf{D}}=
x_{\varepsilon}^2x_{\nu}
(2\hat{n}_{\varepsilon}-3\hat{n}_{\gamma}+2\hat{n}_{\mu}-\hat{n}_{\nu})
+x_{\varepsilon}^2x_{\nu}x_{\mu}
(3\hat{n}_{\gamma}-6\hat{n}_{\mu}+3\hat{n}_{\nu})
+x_{\varepsilon}^2x_{\nu}x_{\mu}^2
(4\hat{n}_{\mu}-3\hat{n}_{\nu})
+x_{\varepsilon}^2\hat{n}_{\nu}
\end{aligned}
\end{equation}
Upon closer inspection, we find that the coefficients involving the operators $\hat{n}_i$ sum up to $2\hat{n}_E$, which is why $\hat{\mathsf{d}}:=\hat{\mathsf{D}}(2\hat{n}_E)_{*}^{-1}$ qualifies as the generator of a DTMC in the sense of~\eqref{eq:eDTMCpgfEvo}. We have thus succeeded in statically extracting information from the combinatorial species of PRBTs on the joint pattern count distribution of patterns $P2$ and $P3$, with some illustrative examples plotted in Figure~\ref{fig:trees}.

\begin{figure}
    \centering
    {$\begin{array}{c}\gdef\pScale{0.18}
      \includegraphics[scale=\pScale]{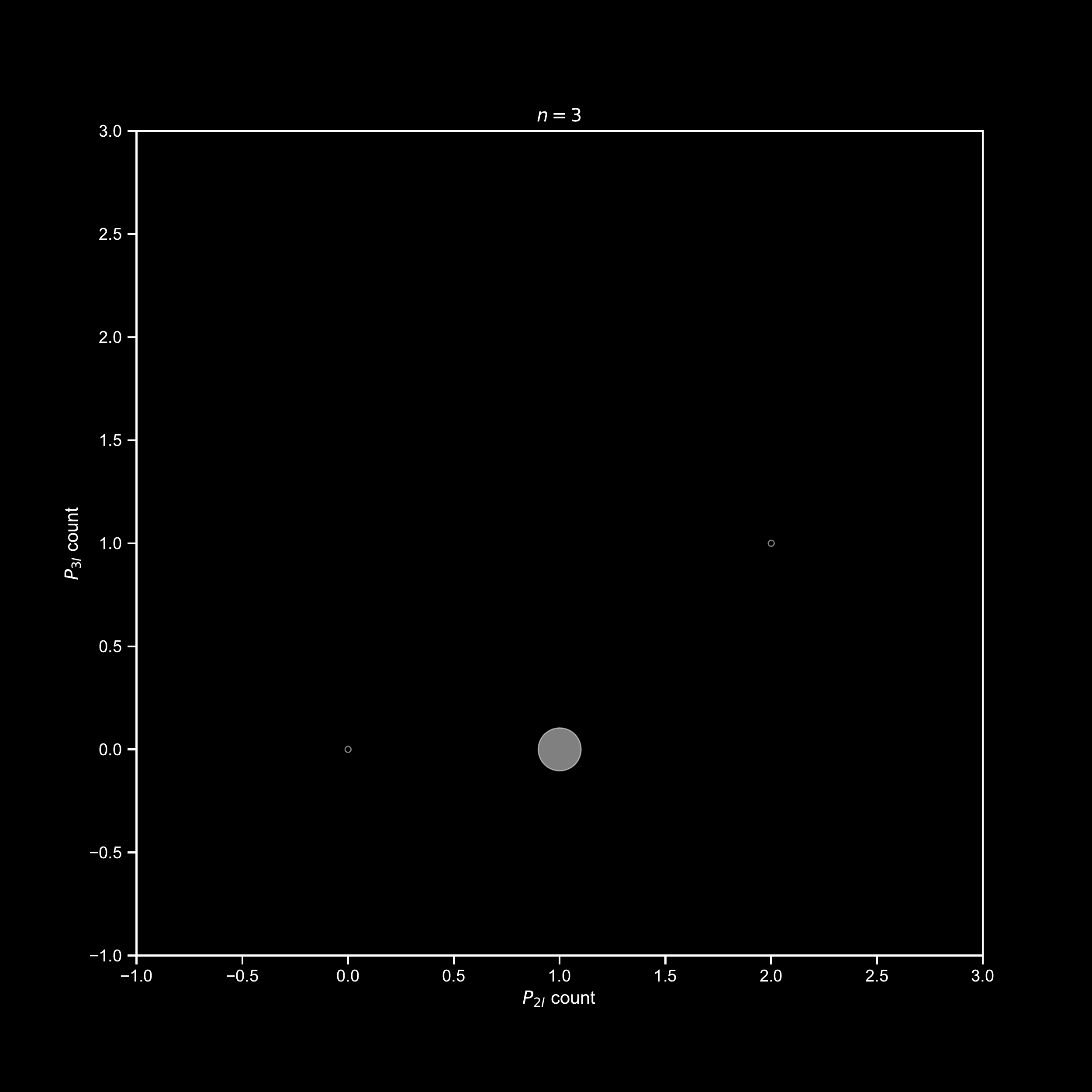}$\!$
      \includegraphics[scale=\pScale]{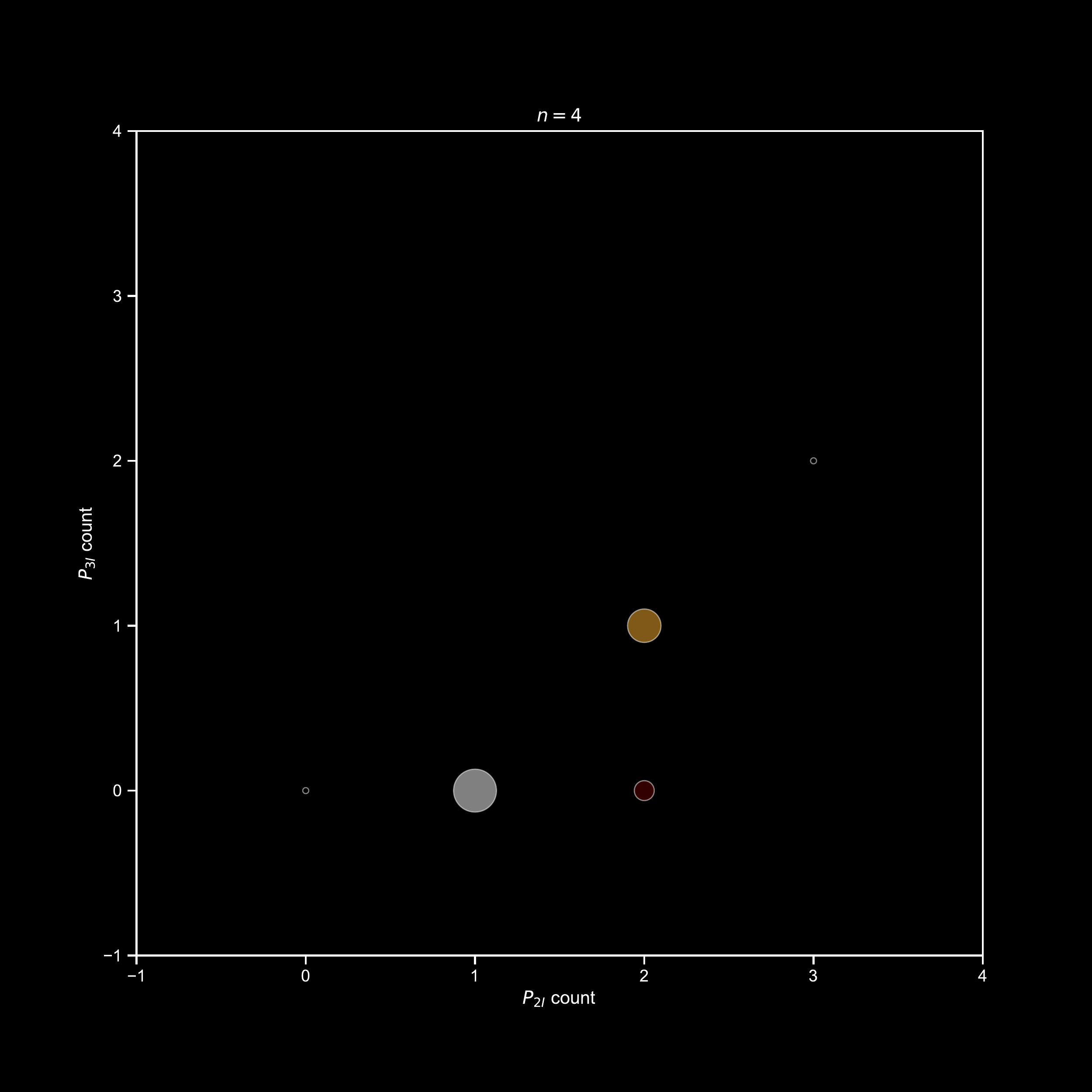}$\!$
      \includegraphics[scale=\pScale]{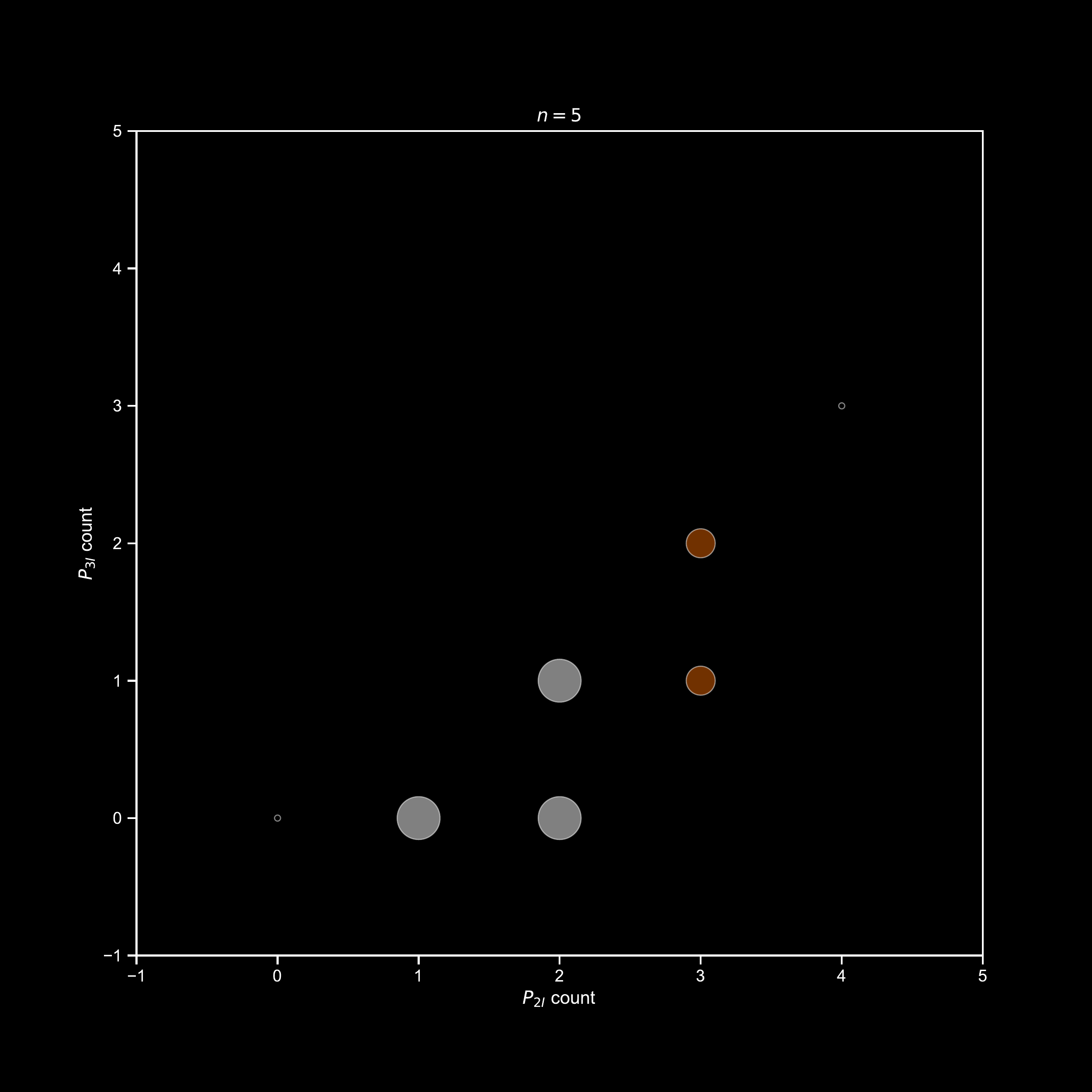}
      \\[-0.5em]
      \includegraphics[scale=\pScale]{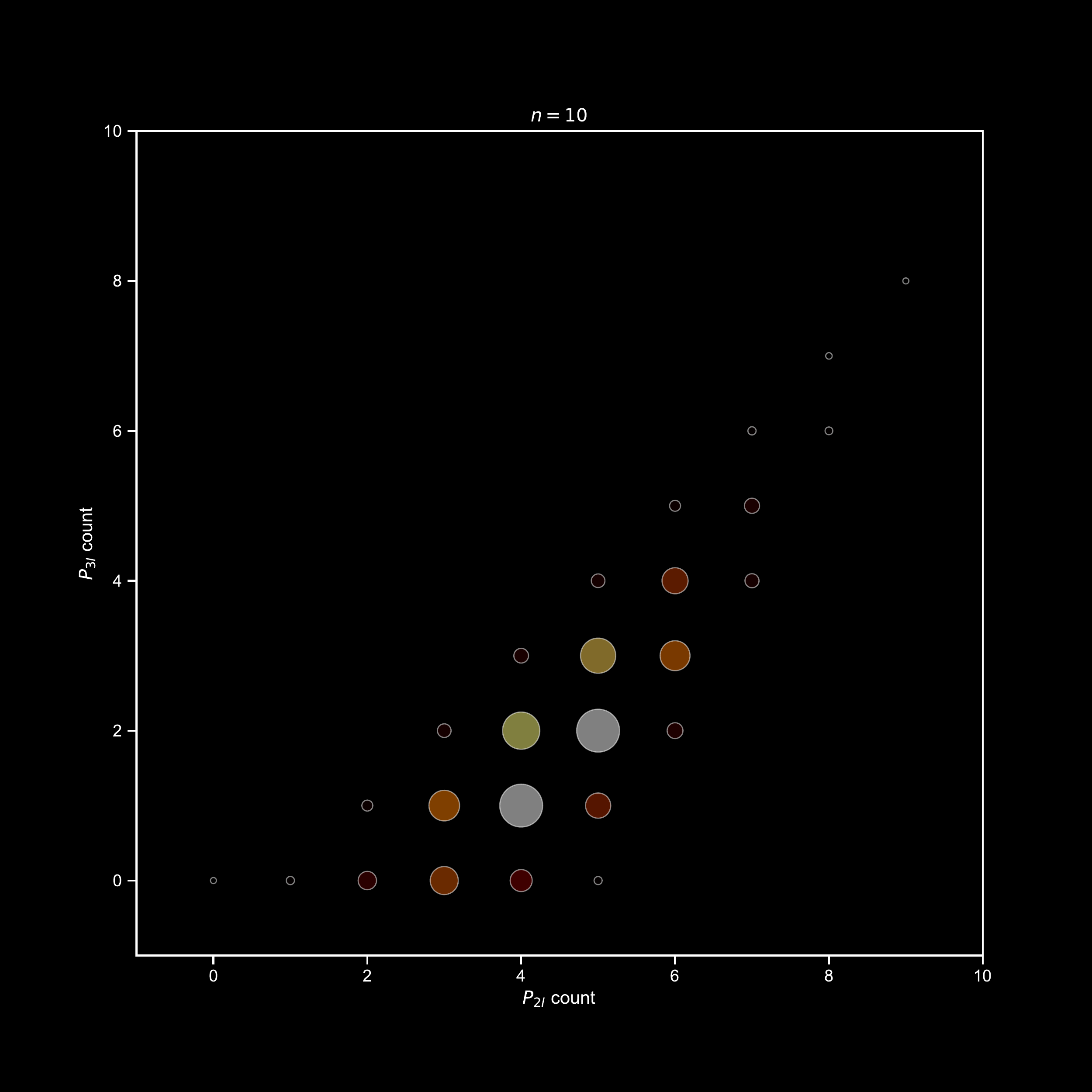}$\!$
      \includegraphics[scale=\pScale]{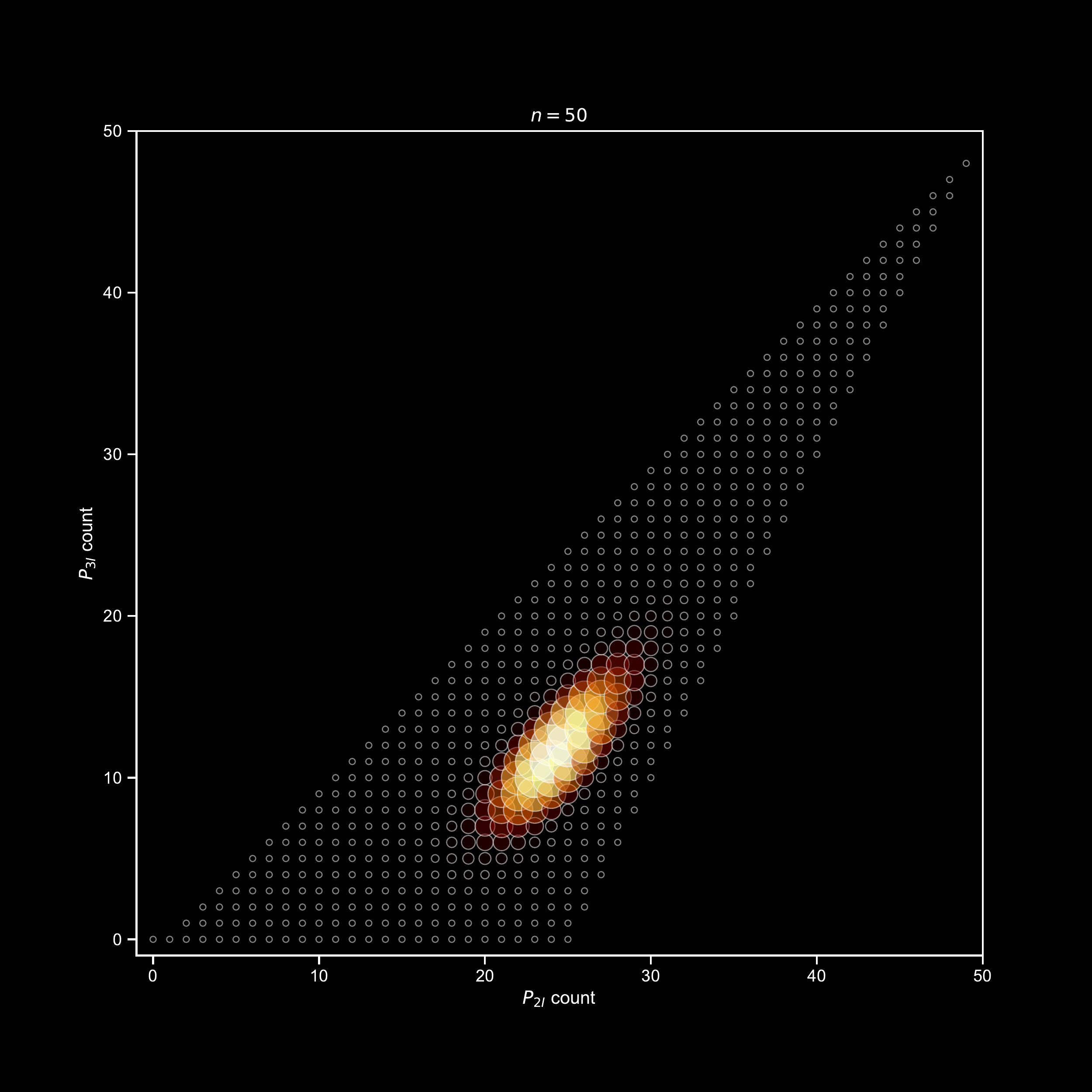}$\!$
      \includegraphics[scale=\pScale]{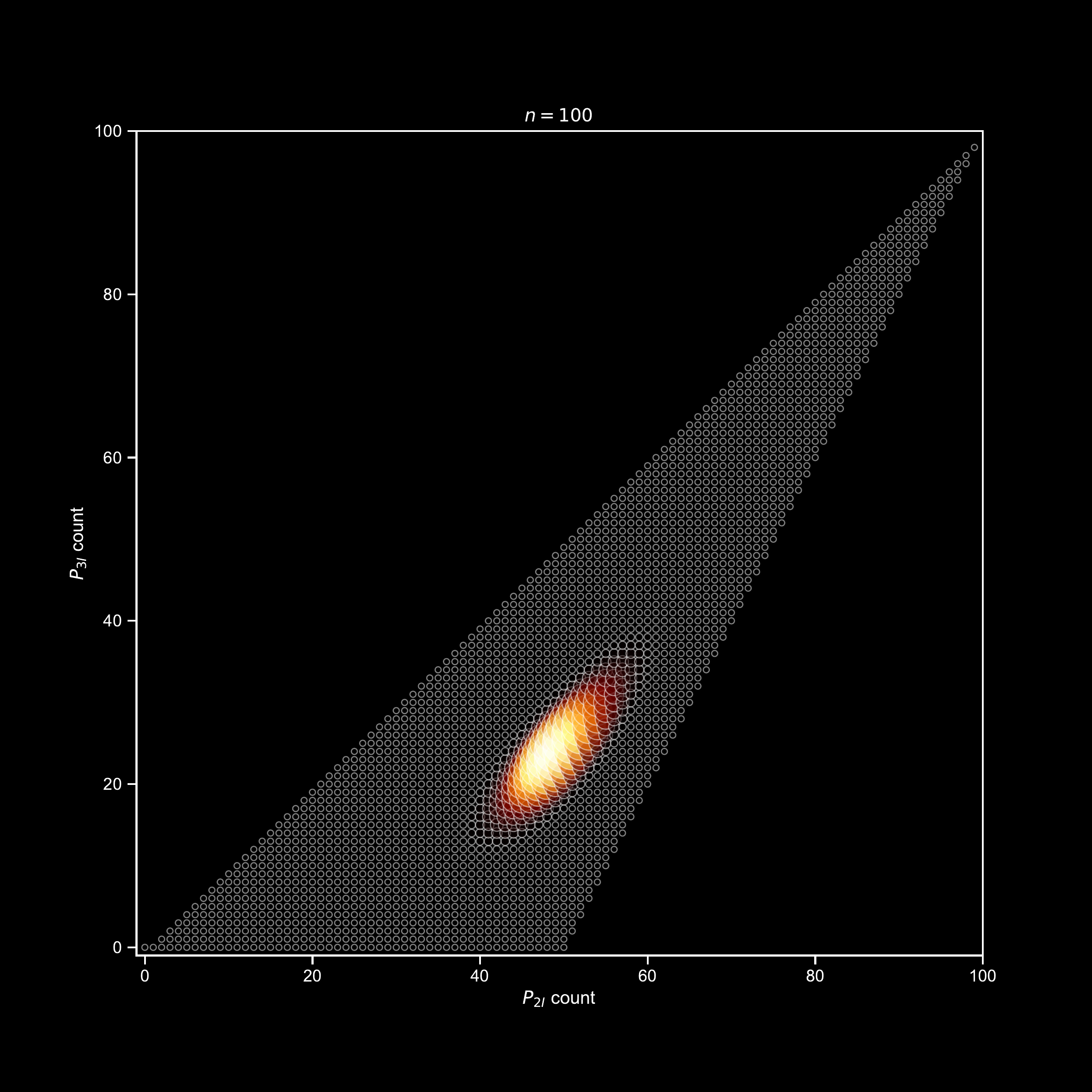}
    \end{array}$}
    \caption{\label{fig:trees}Probability distribution of \textbf{pattern counts} of pattern ${P2}$ (horizontal axis) and of pattern ${P3}$ (vertical axis) in $\hat{\mathsf{d}}^n\ket{\,|\,}$ for $n=3,4,5,10,50,100$, and with both point area and color indicating the value of the probability (larger area and brighter color for higher probability values); \href{http://nicolasbehr.com/files/videos/anim-binarytreepatterns.gif}{animated version}).}
\end{figure} 

\section{Conclusion and outlook}

Modern rewriting theory in the framework of $\cM$-adhesive categories and with conditions on objects and rewriting rules~\cite{habel2009correctness,ehrig2014mathcal,Braatz:2010aa} in either DPO-~\cite{ehrig:2006aa} or SqPO-semantics~\cite{Corradini2006} provides a powerful framework capable of encoding many types of rewriting systems over graph-like structures of practical interest. However, in the theory of Markov chains~\cite{norris} and in particular in the theory of enumerative combinatorics~\cite{FlajoletSedgewick}, even though graph-like structures and their manipulations play a prominent role, it is rarely if ever the case that such situations are analyzed via rewriting-theoretical methods. While the traditional focus of rewriting theory had been predominantly upon analyzing traces of rewriting systems and their causal properties, the aforementioned fields instead require a form of reasoning that may be intuitively described as requiring the analysis of \emph{all} possible rewriting traces of a given system, albeit typically under a certain form of \emph{projection} induced via only tracking the behavior of \emph{observables} along the traces. Over the course of a long-term research project aimed at identifying the precise conceptual and mathematical prerequisites to implement such a type of reasoning, we have developed the \emph{rule algebra} framework~\cite{bdg2016,bp2018,bp2019-ext,nbSqPO2019}, which we recently were able to extend to the setting of compositional rewriting theories over $\cM$-adhesive categories for rules with conditions in both DPO- and SqPO-semantics in~\cite{behrRaSiR,BK2020}. The rule algebra approach refocuses the analysis of rewriting systems from derivation traces to rule compositions, with the rule algebra product of two algebra elements encoding all possibilities to compose the underlying rules. Together with a suitable notion of representation theory, this approach permits a novel formalization of continuous-time Markov chains (CTMCs) that arise from stochastic rewriting systems~\cite{bdg2016,bp2019-ext,nbSqPO2019,bdg2019,BK2020}.

The main contribution of the present paper is an extension of the rule-algebraic framework for the construction and analysis of rewriting-based CTMCs to the setting of weighted combinatorial structures (Section~\ref{sec:wcs}). We moreover exhibit the notion of \emph{embedded discrete-time Markov chains (eDTMCs)} that are associated to each rewriting-based CTMC (Section~\ref{sec:edtmcs}). Both constructions are strongly motivated from the methodology advocated in our earlier work~\cite{bdg2019,BK2020} of focussing the analysis of CTMCs upon exponential moment-generating functions (EMGFs) for ensembles of pattern-counting observables. We demonstrate here that analogous formulations yield valuable insights also in the combinatorics and eDTMC settings, with a worked example in the setting of the rewriting of planar rooted binary trees (PRBTs) provided in Section~\ref{sec:PRBTs} illustrating the potential of these novel methods. While we certainly do not claim that our rewriting- and rule-algebra-based methods would necessarily result in entirely unexpected results on PRBTs (which in this particular case are in fact likely to be derivable along the lines of the work of Rowland~\cite{Rowland2010} that is based upon the combinatorial inclusion-exclusion principle), the main motivations for the rule-algebraic approach are its \emph{universality}, its \emph{agnosticism} and its potential for \emph{automation} of the requisite computations: we may (at least formally) specify the various notions of generating functions for any rewriting system that satisfies the requirements of the rule algebra framework, we do not need to employ any sophisticated combinatorics arguments to perform the computations, and ultimately all rule-algebraic computations may be performed automatically via implementations such as our (as of yet experimental) \texttt{ReSMT} framework~\cite{behr2020commutators}. Conversely, the ultimate motivation of the present work consisted in opening the rich algorithmic toolkit of enumerative and analytic combinatorics~\cite{FlajoletSedgewick} for application to analyzing stochastic rewriting systems, with the goal of developing novel static analysis techniques in particular in the settings of biochemical~\cite{Boutillier:2018aa} and organo-chemical~\cite{Andersen2016} graph rewriting. Crucially, the standard combinatorics intuition as described in the seminal book~\cite[Ch.~III.1.2]{FlajoletSedgewick} \emph{``[\ldots] the eventual goal of multivariate enumeration is the quantification of properties present with high regularity in large random structures.''} has to be reconsidered in the aforementioned chemical rewriting settings, since repeated applications of rewriting rules will \emph{not} necessarily always result in increasingly large graphical structures. However, no part of the rule-algebraic specification of evolution-equations was even remotely based on any asymptotic arguments, which is why we believe this novel viewpoint will ultimately constitute a valuable addition to the computational toolkit in the applied sciences and potentially even in combinatorics itself.

\end{document}